\documentclass[9pt,shortpaper,twoside,web]{ieeecolor}
\usepackage{generic}
\usepackage{cite}
\usepackage{cuted}
\usepackage{amsmath,amssymb,amsfonts}
\usepackage{algorithmic}
\usepackage{graphicx}
\usepackage{cite}
\usepackage{amsmath,amssymb,amsfonts}
\usepackage{graphicx,xcolor}
\usepackage{hyperref}
\usepackage{footmisc}
\usepackage{textcomp}
\usepackage{mathbbol}
\usepackage{dirtytalk}              
\usepackage{subcaption}
\usepackage{mathtools}
\usepackage{verbatim}
\usepackage{amssymb}
\let\labelindent\relax
\usepackage{enumitem}
\usepackage{arydshln}
\usepackage{epstopdf}

\newtheorem{definition}{Definition}
\newtheorem{remark}{Remark}
\newtheorem{thm}{Theorem}
\newtheorem{lemma}{Lemma}
\newtheorem{cor}{Corollary}

\newtheorem{expm}{Example}

\newtheorem{assump}{Assumption} 
\usepackage{textcomp}
\definecolor{hl}{RGB}{230,245,255} 
\def\BibTeX{{\rm B\kern-.05em{\sc i\kern-.025em b}\kern-.08em
    T\kern-.1667em\lower.7ex\hbox{E}\kern-.125emX}}
\begin{document}
\title{Topological Conditions for Echo Chamber Formation under the FJ model: A Cluster Consensus-based Approach %An Analysis for Cluster Consensus
%Echo chamber formation under the FJ model: A cluster consensus based approach
}
\author{Aashi Shrinate$^{1}$, \IEEEmembership{Student member, IEEE},  Twinkle Tripathy$^2$, \IEEEmembership{Senior Member, IEEE}, and Laxmidhar Behera$^{3}$ \IEEEmembership{Senior Member, IEEE}
\thanks{Aashi Shrinate$^{1}$ is a doctoral student and  Twinkle Tripathy$^{2}$ is an Assistant Professor in the Department of Electrical Engineering, Indian Institute of Technology Kanpur, Kanpur, Uttar Pradesh, India, 208016. \\
Laxmidhar Behera$^{3}$ is the director of the Indian Institute of Technology, Mandi. 
Email: {\tt\small aashis21@iitk.ac.in, ttripathy@iitk.ac.in and lbehera@iitk.ac.in}}}
\maketitle

\begin{abstract}
The Friedkin-Johnsen (FJ) model is a popular opinion dynamics model that explains the disagreement that can occur even among closely interacting individuals. Cluster consensus is a special type of disagreement, where agents in a network split into subgroups such that those within a subgroup agree and those in different subgroups disagree. {In large-scale social networks, users often distribute into echo chambers (\textit{i.e. groups of users with aligned views}) while discussing contested issues such as electoral politics, social norms, \textit{etc}}. Additionally, they are exposed only to opinions and news sources that align with their existing beliefs. Hence, the interaction network plays a key role in the formation of an echo chamber. Since cluster consensus can represent echo chambers in a social network, we examine the conditions for cluster consensus under the FJ model with the objective of determining the properties of the interaction network that lead to echo chamber formation. We present topology-based necessary and sufficient conditions for cluster consensus under the FJ model, regardless of the edge weights in the network and stubbornness values (which are difficult to estimate parameters in a social network). A major advantage of the proposed results is that they are applicable to arbitrary digraphs. Moreover, using the proposed conditions, we explain the emergence of bow-tie structures which are often observed in real-world echo chambers. Finally, we also develop a computationally feasible methodology to verify the proposed conditions for cluster consensus. 
\end{abstract}

\begin{IEEEkeywords}
Opinion dynamics; Friedkin-Johnsen model; Cluster consensus; Echo chambers; Multiconsensus
\end{IEEEkeywords}
 
\section{INTRODUCTION}
An echo chamber consists of users who are exposed only to partisan views that reinforce their existing beliefs \cite{pratelli2024entropy}. Empirical evidence shows that user interactions and recommendation algorithms play a significant role in the formation of the echo chambers \cite{echo_chamber}. This motivates the study of how social network interactions affect the emergence of echo chambers. 

The formation of echo chambers in social networks is equivalent to the networked agents achieving cluster consensus. The latter is a widely studied emergent behaviour in networked agents, that occurs when agents split into two or more subgroups such that the agents within a subgroup have consensus, but those in different ones disagree \cite{xia2011clustering}. Its weaker version is the multiconsensus problem \cite{gambuzza2020distributed}, where the agents in different subgroups can also sometimes have consensus. Cluster consensus finds several applications, such as task distribution in multi-robot systems \cite{Switching_T}, formation control \cite{anderson2008rigid}, and rendezvous at multiple points \cite{tomaselli2023multiconsensus}. In the sociological framework, cluster consensus represents the formation of subgroups in a society with aligned views (echo chambers), with widespread disagreement otherwise. %In general, networks of interacting agents (in biological swarms, power grids, drone swarms, \textit{etc.}) exhibit diverse emergent behaviours. Among these, cluster consensus is a widely studied outcome where distinct subgroups of agents converge to the distinct states, while achieving consensus within each subgroup. %Networks of interacting agents represent a wide range of real-world entities, including biological swarms, power grids, drone swarms, and social interactions. Local interactions among agents in such networks often result in diverse emergent behaviours. Among these, consensus is a widely studied outcome, where all the agents in a network converge to the same state. Cluster consensus is yet another interesting emergent behaviour,  %Therefore, cluster consensus captures a more realistic distribution of opinions in society than consensus, which rarely occurs. %In the literature, mathematical models, such as the DeGroot's model \cite{degroot1974consensus}, the FJ model \cite{friedkin1990opinions}, the {Hegselmann}-Krause model \cite{rainer2002opinion}, and the Biased assimilation model \cite{dandekar2013biased}, explain complex aspects of human interactions such as individual biases and homophily.

In the literature, several opinion dynamics models \cite{degroot1974consensus,friedkin1990opinions,rainer2002opinion,dandekar2013biased} characterise the evolution of opinions by accounting for complex aspects of human interactions. In the DeGroot model \cite{degroot1974consensus}, each agent updates its opinion as a weighted average of its neighbours' opinions. While this process of opinion evolution is supported by empirical evidence, it fails to explain the outcome of persistent disagreement in social networks. The FJ model \cite{friedkin1990opinions}, an extension of the DeGroot model, takes individuals' biases into account by introducing stubborn agents. The opinions of agents then generally converge to distinct values (reach disagreement). Other popular extensions of the DeGroot model include the Hegselmann-Krause (HK) \cite{rainer2002opinion} and the Biased Assimilation (BA) \cite{dandekar2013biased} model. The HK model incorporates the phenomenon of homophily, where each agent interacts only with others who share similar views. On the other hand, the BA model considers that an agent views information through a partisan lens, readily accepting information consistent with its beliefs. Among these models, the FJ model is popular because it has been empirically validated on large datasets \cite{friedkin2015control} and is also analytically tractable. Thus, in this paper, we explore the problem of cluster consensus under the FJ model. %Notably, the opinion dynamics models capture the collective decision-making in both socio-political and engineered networks. Thus, cluster consensus achieved under the FJ model can also be leveraged for engineering applications in robotic swarms, oscillator networks, \textit{etc.} %  in agent' interactions and is biased to explain disagreement. %In recent decades, social media networks have become a major platform for advertising, conducting socio-political campaigns, and spreading awareness.

\textit{Related Literature:} In the FJ framework, the authors in \cite{friedkin1990opinions} show that if each non-stubborn agent in the network is reachable from a stubborn one, then the final opinions of the entire network depend on the initial opinions of only the stubborn agents. Reference \cite{parsegov2016novel} introduces \textit{oblivious agents}, who are non-stubborn and do not have a path from any stubborn agent. The algebraic and graph-theoretic conditions for convergence in networks with oblivious agents are presented in \cite{parsegov2016novel} and \cite{TIAN2018213}, respectively. The works  \cite{friedkin2015control,como2016local,Opinion_Fluctuations,yao2022cluster} characterise the conditions that result in numerous emergent behaviours in the FJ framework. In \cite{friedkin2015control}, the authors design stubbornness values that lead to consensus and polarisation. However, stubbornness is an intrinsic property of an individual, which may not always be feasible to modify or even estimate. Reference \cite{como2016local} presents topology-based conditions under which polarisation occurs in undirected graphs with two stubborn agents. Further, in \cite{Opinion_Fluctuations}, the authors show that the opinions of non-stubborn agents can even exhibit oscillations under the FJ model with randomised interactions. In summary, the network topology plays a key role in shaping the emergent behaviours under the FJ framework \cite{como2016local,Opinion_Fluctuations}.
 
 %Further, the authors in \cite{Opinion_Fluctuations} examine the FJ model with randomised interactions and reveal that the opinions of non-stubborn agents can exhibit oscillations in the presence of stubborn agents.} In contrast, the present work focuses on {cluster consensus}, which remains less-explored in the FJ framework. %Cluster consensus occurs when the agents in a network split into two or more subgroups such that the agents within a subgroup have consensus, but those belonging to different groups disagree. Multiconsensus (or group consensus) is a special form of cluster consensus where the agents in different subgroups can sometimes have the same final opinion. Cluster consensus represents swarm behaviour and has applications in task distribution \cite{Switching_T} and formation control \cite{anderson2008rigid}. 
 
% In the opinion dynamics literature, cluster consensus has been observed under the Hegselmann-Krause model in \cite{rainer2002opinion}, where the opinion clusters form due to the homophilious interactions. 
 %The present work examines the role of the network topology in achieving cluster consensus in arbitrary digraphs. 
 More recently, the study of network topology in the FJ model is receiving attention in the design of recommender algorithms for opinion formation. The seminal works \cite{GHADERI20143209} and \cite{gionis2013opinion} quantify the influence of a stubborn agent using hitting probabilities of random walks over the network. These studies enable the design of topology-based interventions (recommendations) for desirably shaping opinions in \cite{gionis2013opinion,ECC_aashi}. Finally, reference \cite{yao2022cluster} presents topology-based conditions to achieve cluster consensus in the FJ framework. However, this result applies only to weakly connected digraphs, where each cycle contains atmost one stubborn agent.
 
 %In contrast, the present work focuses on {cluster consensus}, which remains less-explored in the FJ framework.
 \textit{Contributions:} In the present work, we consider a group of $n$ interacting individuals with their opinions governed by the FJ model. The primary objective of this work is to determine the topological properties of the interaction network that lead to the formation of echo chambers. To this end, we explore the relation between the network topology and final opinions in the FJ framework. %and identify the topological properties that lead to the equivalent phenomenon of cluster consensus.
 We start by presenting the necessary and sufficient conditions for any subgroup of agents in the network satisfies to always converge to the same final opinion (forms an opinion cluster). These conditions rely on the existence of special topological structures within the subgroups, defined by a \textit{Locally Topologically Persuasive (LTP) agent and its persuaded agent(s)} (introduced in our earlier work \cite{shrinate2025opinionclusteringfriedkinjohnsenmodel}). Interestingly, these LTP agents are not necessarily stubborn. Additionally, we prove that suitable edge-weights within the networks can also sometimes lead to the formation of opinion clusters. %\textcolor{red}{In contrast, if all the agents are non-stubborn, opinion clusters form even in the absence of the LTP agents.} 
 %We start by presenting the necessary and sufficient conditions for any subgroup of agents in the network satisfies to always converge to the same final opinion (forms an opinion cluster). TheseWhen a subgroup of agents containing a stubborn agent satisfies a special topological relation defined by a \textit{Locally Topologically Persuasive (LTP) agent and its persuaded agent(s)} (introduced in our earlier work \cite{shrinate2025opinionclusteringfriedkinjohnsenmodel}). Interestingly, LTP We show that suitable edge-weights within the networks can also sometimes lead to the formation of opinion clusters. %\textcolor{red}{In contrast, if all the agents are non-stubborn, opinion clusters form even in the absence of the LTP agents.} 
 
 %In the polarising discussions that cause echo chambers to form, . 
%In social media Echo chambers form due Often social media reinforces existing beliefs, causing like-minded users to form echo chambers.
%Social media platforms
%Due to tend interactions among like-minded user
%To maximise user engagement, social recommender systems often promote content that aligns with the user's beliefs. The lack of diverse perspectives causes groups of like-minded users to form echo chambers. Moreover, %, leading to political polarisation and extremist behaviours

Individuals in different echo chambers display sharp ideological divides, similar to the behaviour of cluster consensus in social networks. We establish that a network with $m$ stubborn agents achieves cluster consensus with $m$ opinion clusters if and only if it can be divided into $m$ subgroups, each composed only of a stubborn LTP agent and those persuaded by it. Contrary to \cite{yao2022cluster}, the proposed topology-based conditions can lead to cluster consensus in \textit{arbitrary digraphs}. We also present a computationally feasible methodology for verifying whether a network satisfies the proposed conditions. %Moreover, in the presence of oblivious agents, an additional opinion cluster emerges if the subgraph composed of these agents is aperiodic independly strongly connected compoenent.
%Therefore, next, we examine the emergence of cluster consensus (where two distinct opinion clusters never converge to the same opinion) under the FJ model. The presence of a stubborn agent in each opinion cluster aids the objective of attaining cluster consensus. We show that a network with $m$ stubborn agents achieves cluster consensus with $m$ opinion clusters if and only if it can be divided into $m$ subgroups, each composed only of a stubborn LTP agent and those persuaded by it. Moreover, in the presence of oblivious agents, an additional opinion cluster emerges if the subgraph of composed of these agents is aperiodic.
%We obtain an LTP-based necessary and sufficient condition on network topology that results in cluster consensus. 

 %the presence of stubborn agents within the subgroup plays a key role here. We show that , the stubborn agent and 

% Thereafter, we derive the conditions for cluster consensus  the agents partition into 

A preliminary version of this paper \cite{shrinate2025opinionclusteringfriedkinjohnsenmodel} (accepted for presentation at the American Control Conference, 2026), defined the notion of LTP agent and its topological relation with the agents in its vicinity that belong to its persuaded set. It established an LTP agent and those in its persuaded set, always form an opinion cluster.  The present work significantly generalises this sufficient condition by deriving necessary and sufficient conditions for the formation of opinion clusters, both in the presence and the absence of stubborn agents. Moreover, this work also presents necessary and sufficient conditions for cluster consensus to explain the formation of echo chambers.

The key contributions of this work can be summarised as follows:
\begin{itemize}
    %\item We define the notion of \textbf{\textit{ Locally Topologically Persuasive}} (LTP) agent and the unique set of non-influential agent(s) in its vicinity that the former is associated with. We present that an LTP agent and the agents associated with it always form an opinion cluster under the FJ model. Notably, this result highlights that even a set of non-influential agents with LTP-based topological properties can form an opinion cluster. %Furthermore, we use the LTP-based opinion clusters to present topology-based sufficient conditions for multiconsensus in any arbitrarily connected digraphs. 
    \item \textit{Formation of opinion clusters:} Under the FJ model, the LTP-based necessary and sufficient condition ensures that a subgroup of agents, containing a stubborn agent, forms an opinion cluster. We also show that a subgroup of agents composed only of non-stubborn agents can also form an opinion cluster if and only if a suitable edge weight relation holds. %S%pecial cases under which eqn. \eqref{eqn:rank_condition} holds are presented.
    \item \textit{Conditions for Cluster Consensus:} We present LTP-based necessary and sufficient conditions under which a network achieves cluster consensus under the FJ model.    This result generalises the conditions presented in \cite{yao2022cluster} to arbitrary digraphs and explains the bow-tie structure observed in real-world echo chambers \cite{mattei2022bow}.
    %We also show that if $\mathcal{G}$ contains the oblivious agents, it forms $m+1$ opinion clusters under cluster consensus.
   % su%\textit{if and only if}  the network $\mathcal{G}$ admits a partition into $m$ subgroups such that each group has a unique stubborn agent that isit has that form the necessary and sufficient conditions for a network with $m$ stubborn agents to achieve cluster consensus for \textit{almost all} initial states. 
    \item \textit{A methodology to verify proposed topological conditions}: We present a computationally efficient and scalable methodology that determines whether a network satisfies the LTP-based conditions for cluster consensus.%, with a time complexity of $O(n(n+|\mathcal{E}|))$, where $n$ and $|\mathcal{E}|$ denote the cardinality of nodes and edges, respectively. %in a graph.. %identify all the LTP agents in a network and determine its time complexity to show that identifying all LTP agents is computationally feasible (of order $O(n^3)$). Moreover, The time complexity of  is only of order 
    \end{itemize}

\textbf{Organisation of the Paper}: Sec. \ref{Sec2} introduces the notations and the relevant preliminaries. Sec. \ref{Sec:FJ} introduces the cluster consensus problem in the FJ framework. Sec. \ref{Sec:MR} presents LTP-based conditions for cluster consensus. Finally, Sec. \ref{Sec:conc} concludes with insights into future directions. The proofs of our results are given in the Appendix.

%In a preliminary version of this paper \cite{shrinate2025opinionclusteringfriedkinjohnsenmodel} (accepted for presentation at the American Control Conference, 2026), we introduced the Locally Topologically Persuasive (LTP) agents. An LTP agent is topologically suitably located such that, along with some agents in its vicinity (belonging in its persuaded set), it always converges to the same final opinion (leading to multiconsensus, a weaker form of cluster consensus). In this paper, we generalise these results by presenting necessary and sufficient conditions for cluster consensus, which  explain the formation of echo chambers in the FJ framework.
 \vspace{-7pt}
\section{Notations and Preliminaries}
\label{Sec2}
 
\textbf{Notations}:
%\subsection{Notations}
\label{subsec:NOT}
$\mathbb{R}$ denotes the set of real numbers. $\mathbb{1}_n$ ($\mathbb{0}_n$) is a column vector with each entry equal to $+1$ $(0)$ and $I_n$ is the identity matrix. $\mathbb{e}_i$ denotes the standard basis vector of $\mathbb{R}^n$ with $i^{th}$ entry equal to $1$.
Matrix $M=\operatorname{diag}(m_1,m_2,...,m_n)$ is a diagonal matrix.  $[n]$ denotes the set $\{1,2,...,n\}$ and $i:j$ denotes the set of natural numbers $\{i,i+1,...,j\}$. Consider a matrix $M \in \mathbb{R}^{n \times n}$, its spectral radius is represented by $\rho(M)$, its determinant is given as $\operatorname{det}(M)$ and its rank is $\operatorname{rank}(M)$. For index sets $\alpha,\beta \subseteq [n]$, the submatrix of $M$ with rows indexed by $\alpha$ and columns indexed by $\beta$ is given by $M[\alpha,\beta]$. If $\alpha=\beta$, we simply denote  $M[\alpha,\alpha]$ as $M[\alpha]$.
% 
%\subsection{Graph Preliminaries}

\textbf{Graph Preliminaries:}
Let $\mathcal{G}=(\mathcal{V},\mathcal{E})$ be a directed graph (digraph) with $\mathcal{V}=[n]$ denoting the $n$ agents and $\mathcal{E} \subseteq \mathcal{V} \times \mathcal{V}$ representing the set of edges. An edge from node $i$ to $j$ is denoted as $(i,j) \in \mathcal{E}$; it indicates that node $i$ is an \textit{in-neighbour} of node $j$ and node $j$ is an \textit{out-neighbour} of node $i$.
%The set of all in-neighbours of an agent $i$ is defined as $N_{in}(i)=\{j \in \mathcal{V}|(j,i) \in \mathcal{E}\}$ and the set of all of its out-neighbours is defined as $N_{o}(i)=\{j \in \mathcal{V}|(i,j) \in \mathcal{E}\}$.
%
A \textit{source} is a node without any in-neighbours.
The weighted adjacency matrix of $\mathcal{G}$ is denoted by $W=[w_{ij}] \in \mathbb{R}^{n \times n}$ with $w_{ij} > 0$ if $(j,i) \in \mathcal{E}$, zero otherwise. The in-degree of a node $p\in \mathcal{V}$ is defined as $d_{i}(p)=\sum_{j=1}^{n}w_{pj}$. The Laplacian matrix is defined as $L=D-W$, where $D=$ diag$(d_{i}(1),...,d_{i}(n))$.  {In \cite{Kron_red_digraphs}, the loopy Laplacian matrix $Q$ is defined as $Q :=L+$ diag$(w_{11},...,w_{nn})$. For any matrix $M=[m_{ij}]\in \mathbb{R}^{p \times p}$, we can construct its associated digraph $\mathcal{G}(M)$ consisting of $p$ nodes. Each non-zero entry $m_{ij}$ results in an edge $(j,i)$ in $\mathcal{G}(M)$. The matrix $Q$ is called the loopless Laplacian matrix as $(Q=L)$ if the associated digraph does not contain any self-loops.}

A \textit{walk} is an ordered sequence of nodes such that each pair of consecutive nodes forms an edge in the graph. A walk in which none of the nodes are repeated is referred to as \textit{path}. A \textit{cycle} is a walk whose initial and final nodes coincide. %{We say that a path from node $x$ to node $y$ traverses node $z$ if the sequence of nodes that form this path contains $z$}.
A graph is \textit{aperiodic} if the gcd of the lengths of all the simple cycles in the graph is $1$. An undirected graph is \textit{connected} if a path exists between any two nodes of the graph. A digraph where a directed path exists between any pair of nodes is called \textit{strongly connected}. {A digraph is \textit{weakly connected} if the undirected version of the digraph is connected.} The maximal subgraph of $\mathcal{G}$ which is strongly connected is called a \textit{strongly connected component} (SCC) of $\mathcal{G}$. An \textit{independent SCC} (iSCC) refers to a SCC in which the in-neighbour(s) of each node is within the SCC.

\textbf{Kron Reduction:}
%\textbf{Kron Reduction:}
\label{subsec:KR}
Let $M\in \mathbb{R}^{p \times p}$ with index set $\alpha\subseteq[p]$ and $\alpha^{c}:=[p] \setminus \alpha$. If $M[\alpha^c]$ is non-singular, then the Schur complement of $M[\alpha^c]$ in $M$ is defined as:
\begingroup
\setlength{\abovedisplayskip}{2pt}
\setlength{\belowdisplayskip}{2pt}
\begin{align}
\label{eqn:Schur_complement}
    M/\alpha^c=M[\alpha]-M[\alpha,\alpha^c](M[\alpha^c])^{-1}M[\alpha^c,\alpha]
\end{align}
\endgroup

 {Consider a set of linear equations $M\mathbf{y}=\mathbb{0}$, where $\mathbf{y}\in \mathbb{R}^{p}$. In \cite{zhang2006schur} (refer to Pg. 18), the author shows that eliminating $y[\alpha^c]$ using Schur complement \eqref{eqn:Schur_complement}, yields the reduced system of equations:
\begingroup
\setlength{\abovedisplayskip}{2pt}
\setlength{\belowdisplayskip}{2pt}
\begin{align}
\label{eqn:reduced_eqns_SC}
   M/\alpha^c \cdot \mathbf{y}[\alpha]=\mathbb{0}.
\end{align}
\endgroup
 
%The digraph $\mathcal{G}(Q)$ is associated with the loopy Laplacian matrix $Q$.
Under Kron Reduction, we reduce the digraph $\mathcal{G}(Q)$ associated with the loopy Laplacian matrix $Q$ by evaluating its Schur complement. The reduced graph $\mathcal{G}_{\alpha}$ obtained under Kron reduction is the associated digraph of $Q/\alpha^c$.

%, which has the following nice properties. %In \cite{Kron_red_digraphs}, the authors extended  {Kron Reduction} to digraphs for applications in the reduction of Markov chains.%, , with applications to the reduction of Markov chains. 
%\begin{comment}
  \begin{lemma}[\hspace{-0.01cm}\cite{Kron_red_digraphs}]
\label{lm:Basic_properties}
Consider a loopy Laplacian matrix $Q \in \mathbb{R}^{p}$ with $\alpha \subset [p]$. %Let $\mathcal{G}(Q)$ be the digraph associated with $Q$ and $\mathcal{G}_{\alpha}$ be the reduced digraph.
Under Kron reduction, the following properties hold:
\begin{enumerate}
    \item $Q/\alpha^c$ is well defined ({\textit{i.e.} $Q[\alpha^c]$ is invertible}) if for each node $i\in \alpha^c$ there is a node $j\in \alpha$ such that a path $j \to i$ exists in $\mathcal{G}(Q)$. (Note that, to adapt this condition to our framework, we consider $w_{ij}>0$ when $(j,i)\in \mathcal{E}$.)
    %  if $\alpha \subseteq [n]$.. %is associated \textbf{with final opinions of agents.}
    \item If $Q$ is a loopless Laplacian matrix, then $Q/\alpha^c$ is also a loopless Laplacian matrix.
    \item An edge $(j,i)$ exists in $\mathcal{G}_{\alpha}$ if and only if there is a path from $j$ to $i$ in $\mathcal{G}(Q)$ composed of nodes in set $\{i,j\} \cup \alpha^c$.
    %\item Consider $\alpha_1,\alpha_2\subset [p]$ such that $\alpha_2\subset \alpha_1$ and $\alpha_2^c=\alpha_1 \setminus \alpha_2$. %and$(Q/\alpha_1^c)/\alpha_2^c$, $Q/(\alpha_1^c \cup \alpha_2^c)$ are well-defined. 
    %Then, %$(Q/\alpha_1^c)/\alpha_2^c=Q/(\alpha_1^c \cup \alpha_2^c)$. 
    %\item The reduced graph $G_{\alpha}$ associated with the Laplacian matrix $Q/\alpha^c$ has an edge between two nodes $i$ and $j$ only if a path exists in $G(Q)$ from $i$ to $j$.
\end{enumerate}
\end{lemma}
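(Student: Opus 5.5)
The three properties in Lemma~\ref{lm:Basic_properties} all follow from two facts about $Q[\alpha^c]$, which is a principal submatrix of a loopy Laplacian. First, $Q[\alpha^c]$ is a Z-matrix with nonnegative diagonal. Second, it is weakly row diagonally dominant: for each row $i\in\alpha^c$,
\[
Q_{ii}-\sum_{k\in\alpha^c, k\neq i}|Q_{ik}| \;=\; \sum_{j\in\alpha} w_{ij} \;\geq\; 0 .
\]
This follows from $Q=L+\operatorname{diag}(w_{ii})$, since the diagonal entry equals $\sum_{k\neq i}w_{ik}$ plus the self-loop weight, which the diagonal loop term cancels. So the approach is to work with $M$-matrix theory and then use the Neumann-series expansion of $(Q[\alpha^c])^{-1}$ to read off sign and sparsity patterns.

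\textbf{Invertibility (item 1).} Call a row $i\in\alpha^c$ \emph{strictly dominant} if it has some $j\in\alpha$ with $w_{ij}>0$. These are exactly the nodes of $\alpha^c$ with an in-neighbour in $\alpha$. By hypothesis, every $i\in\alpha^c$ is reached by a path from $\alpha$. Taking the last node of $\alpha$ on such a path, every $i\in\alpha^c$ is reached by a path lying inside $\alpha^c$ that starts at a strictly dominant row. A weakly diagonally dominant matrix in which every row is connected, in the row-dependency graph, to a strictly dominant row is weakly chained diagonally dominant, hence nonsingular. The direct argument goes as follows. Suppose $Q[\alpha^c]x=0$ and $|x_i|$ is maximal. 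Equality in the dominance inequality then forces every $k$ with $w_{ik}>0$ and $k\in\alpha^c$ to satisfy $|x_k|=|x_i|$. Propagating backwards along edges reaches a strictly dominant row, where the strict inequality forces $x_i=0$. Moreover, $Q[\alpha^c]$ is then a nonsingular $M$-matrix, so $(Q[\alpha^c])^{-1}\ge 0$ entrywise. The expected main obstacle is keeping the edge direction consistent, since $w_{ij}>0$ means $(j,i)\in\mathcal{E}$.

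\textbf{Loopless Laplacian preserved (item 2).} If $Q=L$, then $Q\mathbb{1}=0$. Splitting this into blocks gives $Q[\alpha^c,\alpha]\mathbb{1}+Q[\alpha^c]\mathbb{1}=0$, and hence
\[
(Q/\alpha^c)\mathbb{1}=Q[\alpha]\mathbb{1}-Q[\alpha,\alpha^c](Q[\alpha^c])^{-1}Q[\alpha^c,\alpha]\mathbb{1}=Q[\alpha]\mathbb{1}+Q[\alpha,\alpha^c]\mathbb{1}=0,
\]
so row sums are zero. For $i\neq j$ in $\alpha$, the off-diagonal entry is $Q_{ij}-\sum Q_{ik}(Q[\alpha^c])^{-1}_{kl}Q_{lj}$. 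Both $Q_{ik}\le0$ and $Q_{lj}\le0$, and the inverse is nonnegative, so this entry is $\le 0$. Zero row sums together with nonpositive off-diagonals give a Laplacian. The lemma's notion of ``loopless'' is taken to mean zero row sums, that is, no residual self-weight; this matches the identity $Q=L$.

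\textbf{Edge characterisation (item 3).} Take $i\neq j$ in $\alpha$. The $(i,j)$ entry of $Q/\alpha^c$ is strictly negative if and only if either $w_{ij}>0$, or there exist $k,l\in\alpha^c$ with $w_{ik}>0$, $w_{lj}>0$ and $(Q[\alpha^c])^{-1}_{kl}>0$. No cancellation can occur, because all terms have the same sign. Write $Q[\alpha^c]=D'-N$ with $N\ge0$ and $\rho(D'^{-1}N)<1$. Then $(Q[\alpha^c])^{-1}=\sum_t (D'^{-1}N)^tD'^{-1}$. By the Neumann series, $(Q[\alpha^c])^{-1}_{kl}>0$ if and only if there is a walk, and hence a path, from $l$ to $k$ inside $\alpha^c$. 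Concatenating the edge $j\to l$, the path $l\to k$ in $\alpha^c$, and the edge $k\to i$ gives a path $j\to i$ through $\alpha^c$. The converse direction is the same argument read backwards. Any walk can be pruned to a path, and the two cases together give exactly item 3.
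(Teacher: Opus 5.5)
The paper gives no proof of this lemma; it imports it from \cite{Kron_red_digraphs}. What you have written is therefore a self-contained replacement for a citation, and it is correct. Your route is global. You identify $Q[\alpha^c]$ as a weakly chained diagonally dominant $Z$-matrix, hence a nonsingular $M$-matrix. You then read both the sign pattern (item 2) and the zero pattern (item 3) of $Q/\alpha^c$ off the Neumann series $(Q[\alpha^c])^{-1}=\sum_t (D'^{-1}N)^t D'^{-1}$, whose $(k,l)$ entry is positive exactly when $l$ reaches $k$ inside $\alpha^c$.

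The standard alternative in the Kron-reduction literature eliminates the nodes of $\alpha^c$ one at a time. A single pivot on $k$ changes the $(i,j)$ entry by $-Q_{ik}Q_{kj}/Q_{kk}\le 0$, which creates exactly the edges $j\to k\to i$. The quotient formula for Schur complements, plus induction, then gives items 2 and 3. That route needs only scalar pivots, but it requires the quotient formula and a check that every intermediate pivot is nonzero. Yours avoids the induction and yields the stronger fact that $(Q[\alpha^c])^{-1}\ge 0$ with support given by reachability inside $\alpha^c$. That support-to-path correspondence is exactly what the paper relies on in the proof of C2, when it converts zero rows of ${}^{\alpha}R_{1j}$ into statements about paths.

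Three small repairs are needed.
\begin{enumerate}
\item Since $Q_{ii}=\sum_{k\neq i}w_{ik}+w_{ii}$, the dominance margin of row $i$ in $Q[\alpha^c]$ is $\sum_{j\in\alpha}w_{ij}+w_{ii}$, not $\sum_{j\in\alpha}w_{ij}$. The loop term adds to the diagonal rather than cancelling. This is harmless, because self-loops only create additional strictly dominant rows, and your maximum-modulus argument then gives $x_i=0$ even sooner.
\item In items 2 and 3 you use $(Q[\alpha^c])^{-1}\ge 0$ and $\rho(D'^{-1}N)<1$, but you established these only under the reachability hypothesis of item 1. You can either say so explicitly, which matches how the paper applies the lemma, or close the gap in general. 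To do the latter, note that a nonsingular weakly diagonally dominant $Z$-matrix with nonnegative diagonal must have positive diagonal, since a zero diagonal entry forces a zero row. Then $\rho(D'^{-1}N)<1$ by Perron--Frobenius: $D'^{-1}N\ge 0$ has row sums at most $1$, and $\rho=1$ would give a nonnegative null vector. Hence both items hold whenever $Q/\alpha^c$ is defined.
\item Item 3 should be stated for $i\neq j$ only, since diagonal entries do not correspond to edges.
\end{enumerate}
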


%Consider an SCC composed of nodes in $\mathcal{G}$ such that all the in-neighbours of each node in the SCC also belong to the SCC. We refer to such an SCC as an independent strongly connected component (iSCC).  

%The condensation graph of a graph $\mathcal{G}$ is defined as $C(\mathcal{G})=(\mathcal{V}_c,\mathcal{E}_c)$. Each node $\mathcal{I} \in \mathcal{V}_c$ is a { \color{blue} strongly connected component (SCC)} of graph $\mathcal{G}$, and an edge $(\mathcal{I},\mathcal{J})\in \mathcal{E}_c $ exists if and only if an edge $(i,j)\in \mathcal{E}$ exists in graph $\mathcal{G}$ from node $i \in \mathcal{I}$ to a node $j \in \mathcal{J}$.  A sink of the condensation graph is {an \color{blue} SCC} that forms a node in the $C(\mathcal{G})$ without any outgoing edges.

\vspace{-5pt}
\section{Cluster Consensus in the FJ model}
\label{Sec:FJ}
In this section, we review the FJ model and highlight the relevance of examining cluster consensus under this model.
%Mathematically, we say that the individuals in a network of $n$ agents  have disagreement if there is a pair of individuals $i$ and $j$ such that the values $x_i \neq x_j$, where $x_i$ and $x_j$ denote their respective opinions.
%Lack of sufficient communication is not the only factor causing disagreement, as even well-connected individuals also commonly disagree.  
\begin{comment}
In social settings, discussions among individuals commonly result in disagreement over the well-explored consensus.
 The Friedkin-Johnsen (FJ) model is an averaging-based opinion dynamics model that captures the commonly occurring phenomenon of `persistent disagreement' in social networks. The distinctive property of this model is that it takes the agent's biases into account. Such agents who have a bias towards their preferences are referred to as the stubborn agents, and the rest are called non-stubborn.   
\end{comment}
\vspace{-7pt}
\subsection{FJ model}
\label{sec:FJ_model}
The FJ model is a popular opinion dynamics model based on iterative averaging that incorporates the heterogeneity among individuals due to their biases. Under this model, the individuals can exhibit varying degrees of resistance to changing their opinions. Any individual that does not readily change its opinion is called a stubborn agent with parameter $\beta_i\in [0,1]$ quantifying the degree of its stubbornness. 

Consider a network $\mathcal{G}$ of $n$ individuals (agents) with each individual's opinion about a topic denoted by a scalar $x_i$. Under the FJ model, the opinions of agents in $\mathcal{G}$ evolve as follows: 
\begingroup
\setlength{\abovedisplayskip}{2pt}
\setlength{\belowdisplayskip}{2pt}
\begin{align}
    \mathbf{x}(k+1)=(I_n-
    \beta)W \mathbf{x}(k)+\beta \mathbf{x}(0)
    \label{eq:FJ_opinion_dynamics}
\end{align}
\endgroup

where $\mathbf{x}=[x_1,...,x_n]$ and $\beta=$ diag$(\beta_1,...,\beta_n)$. An agent $i\in \mathcal{V}$ is stubborn if $\beta_i>0$.  {The matrix} $W$ is row-stochastic. 

Let $\mathcal{G}$ be a weakly connected network. In this scenario, $\mathcal{G}$ may contain agents that are unaffected by stubborn behaviour, \textit{i.e.}, a non-stubborn agent exists that does not have a path from any stubborn agent. Such agents are called \textit{oblivious agents} \cite{parsegov2016novel}. {We consider the nodes in $\mathcal{G}$ to be numbered such that the nodes $\{1,2,...,n_1\}$ are the non-oblivious agents and the remaining are oblivious, where $n_1\in[n]$. Further, let the iSCCs composed of the oblivious agents be referred to as oblivious iSCCs, and the agents forming such iSCCs have consecutive indices.} Now, we re-write the {FJ model} in \eqref{eq:FJ_opinion_dynamics} as:
\begingroup
\setlength{\abovedisplayskip}{2pt}
\setlength{\belowdisplayskip}{2pt}
\begin{align*}
\mathbf{x}_{1}(k+1)&=(I-\bar{\beta})\big(W_{11}\mathbf{x}_1(k)+ W_{12}\mathbf{x}_2(k)\big)+\bar{\beta}\mathbf{x}_{1}(0) \\
\mathbf{x}_2(k+1)&=W_{22}\mathbf{x}_2(k) 
\end{align*}
\endgroup
where $\mathbf{x}_1(k)\in \mathbb{R}^{n_1}$ and $\mathbf{x}_2(k)\in\mathbb{R}^{(n-n_1)}$ denote the opinions of the non-oblivious agents and oblivious agents, respectively, $W$ is partitioned as $W=\begin{bmatrix}
    W_{11} & W_{12} \\ \mathbb{0} & W_{22}
\end{bmatrix}$ and the diagonal matrix $\bar{\beta}$ denotes the stubbornness of non-oblivious agents. 
 {Throughout this article, we consider that $\mathcal{G}$ satisfies the following:
\begin{assump}
\label{assump:ns}
Each oblivious iSCC in $\mathcal{G}$, which is composed of two or more oblivious agents, is aperiodic.   
\end{assump}}

 {Assumption \ref{assump:ns} is the necessary and sufficient condition for convergence under the FJ model \cite{TIAN2018213}.}
%The following result gave the conditions under which the opinions of a weakly connected graph converge. % evolving und converge in a w
\begin{lemma}[\cite{parsegov2016novel,TIAN2018213}]
\label{lm:final_op}
 {Consider a network $\mathcal{G}$ with $m\geq 1$ stubborn agents such that Assumption \ref{assump:ns} holds.  The opinions evolving under the FJ model \eqref{eq:FJ_opinion_dynamics} converge to}
 \begingroup
\setlength{\abovedisplayskip}{2pt}
\setlength{\belowdisplayskip}{2pt}
 \begin{align}
 \label{eqn:final_op_with_oblivious}
 \mathbf{x}_1^*&=(I-(I-\bar{\beta})W_{11})^{-1} \big( (I-\bar{\beta}) W_{12}W_{22}^* \mathbf{x}_2(0) + \bar{\beta}\mathbf{x}_1(0)\big) \nonumber\\
  \mathbf{x}_2^*&= W_{22}^* \mathbf{x}_2(0) 
\end{align}
\endgroup
where $\mathbf{x}^*=[\mathbf{x}_1^*,\mathbf{x}_2^*]$ denotes the final opinions and  
$W_{22}^*=\lim_{k \to \infty}W_{22}^k$.
\end{lemma}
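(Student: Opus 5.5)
The plan is to use the block upper-triangular structure of $W$ and treat the two subsystems separately. First handle the oblivious block $\mathbf{x}_2(k+1)=W_{22}\mathbf{x}_2(k)$, then substitute its limit into the non-oblivious dynamics as an asymptotically constant input.

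\emph{Step 1: the oblivious block.} Since no oblivious agent has a path from a stubborn agent, the subgraph on indices $n_1+1:n$ receives no edges from non-oblivious agents. Hence $W_{22}$ is itself row-stochastic. Its condensation has iSCCs, namely the oblivious iSCCs, together with transient SCCs that are reachable from them.

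By Assumption~\ref{assump:ns}, each oblivious iSCC with two or more agents is aperiodic and strongly connected. Its diagonal block is therefore primitive and row-stochastic, so its powers converge to $\mathbb{1}v^\top$. A singleton oblivious iSCC has no in-neighbours other than possibly itself; row-stochasticity forces a self-loop of weight one, so it is trivially convergent.

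For the transient SCCs, each row of the corresponding block has at least one strictly positive entry pointing to an earlier component. So the restriction of $W_{22}$ to all transient nodes is substochastic with a path from every node to an iSCC, and its spectral radius is $<1$. Ordering $W_{22}$ in block-triangular form (iSCC blocks first, transient blocks after), one gets:
\begin{itemize}
\item $1$ is the only eigenvalue on the unit circle;
\item $1$ is semisimple.
\end{itemize}
Together these give the existence of $W_{22}^*=\lim_k W_{22}^k$, and hence $\mathbf{x}_2(k)\to W_{22}^*\mathbf{x}_2(0)$. The main technical point here is semisimplicity, which I will argue via the standard stochastic-matrix result: convergence holds iff every essential class is aperiodic.

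\emph{Step 2: the non-oblivious block.} Set $A=(I-\bar\beta)W_{11}$. I claim $\rho(A)<1$. Every non-oblivious agent has a path from a stubborn agent, and a stubborn agent $i$ has row sum of $A$ equal to $(1-\beta_i)\sum_j (W_{11})_{ij}<1$. A non-stubborn agent $j$ whose in-neighbours include oblivious agents also has row sum $<1$. So $A$ is nonnegative and substochastic, and every row either has sum $<1$ or reaches such a row through the digraph.

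The standard argument then shows $A^k\mathbb{1}\to 0$. Concretely, after $n_1$ steps every row sum of $A^{n_1}$ is strictly less than one, so $\|A^{n_1}\|_\infty<1$. A subtlety is the case $\beta_i=1$: row $i$ of $A$ is then zero, which is still fine. This establishes invertibility of $I-A$.

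\emph{Step 3: combining.} Write
\[
\mathbf{x}_1(k+1)=A\mathbf{x}_1(k)+u(k),\qquad u(k)=(I-\bar\beta)W_{12}\mathbf{x}_2(k)+\bar\beta\mathbf{x}_1(0).
\]
By Step 1, $u(k)\to u^*=(I-\bar\beta)W_{12}W_{22}^*\mathbf{x}_2(0)+\bar\beta\mathbf{x}_1(0)$. Since $\rho(A)<1$, the standard input-to-state argument applies: split $\mathbf{x}_1(k)=A^k\mathbf{x}_1(0)+\sum_{t<k}A^{k-1-t}u(t)$ and use summability of $\|A^t\|$ together with $u(t)\to u^*$. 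This gives $\mathbf{x}_1(k)\to (I-A)^{-1}u^*$, which is exactly \eqref{eqn:final_op_with_oblivious}.

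I expect Step 1 to be the main obstacle, specifically the rigorous handling of transient oblivious components and singleton iSCCs. Step 2 is routine once the reachability property of non-oblivious agents is stated carefully.
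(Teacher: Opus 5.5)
Your proof is correct. The paper gives no proof of this lemma (it is imported from \cite{parsegov2016novel,TIAN2018213}), and your route is essentially the standard argument of those references: the block-triangular split with $W_{21}=\mathbb{0}$, Schur stability of $(I-\bar\beta)W_{11}$ via reachability from stubborn agents, regularity of $W_{22}$ from aperiodicity of the oblivious iSCCs, and a convergent-input argument for $\mathbf{x}_1$. One small slip in Step 1: in a transient oblivious SCC only some row, not every row, has an entry pointing to an earlier component, but the reachability argument you then invoke (the same one as in Step 2) still gives $\rho<1$ for the transient block, so nothing breaks.
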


\begin{comment}
    In this article, we assume the following:

\begin{enumerate}
    \item[Assumption 1-] .
\end{enumerate}
The authors in \cite{} proved that Assumption-1 is the necessary and sufficient condition for the convergence of opinion under FJ model. 
\end{comment}
    
From Lemma \ref{lm:final_op}, it clearly follows that the final opinions depend on the initial opinions of only the stubborn agents and the  agents in oblivious iSCCs. 
Hence, these agents are referred to as the \textit{influential agents} and the set of all influential agents in $\mathcal{G}$ is denoted by $\mathcal{I}$.

Reference \cite{TIAN2018213} establishes that the final opinions of agents lie in the convex hull of the initial opinions of influential agents.   Moreover, the coefficients of each of these convex combinations (that determine the final opinions) are related to the network topology (captured by $W$) and the stubbornness of agents \cite{GHADERI20143209}. Depending on these factors, the final opinions of agents (except the agents in an oblivious iSCC) often converge to arbitrary distinct values, which is termed as disagreement. 
\vspace{-25pt}
\subsection{Cluster Consensus}
{ In social networks, disagreement is more prevalent as opposed to consensus \cite{friedkin2015control}. 
A special form of disagreement known as `echo chambers' is commonly observed across large-scale social networks \cite{echo_chamber}, where the agents divide into subgroups such that the agents within each subgroup have the same opinions. In the literature on multi-agent systems, this behaviour is termed as multiconsensus (or 
group consensus).} More formally, multiconsensus is defined as: 
%An opinion cluster is defined as:}
\begin{definition} Consider a partition of the agents in $\mathcal{G}$ into $z$ disjoint subgroups (clusters) $\mathcal{V}_1,...,\mathcal{V}_z$ such that $\cup_{h=1}^z \mathcal{V}_h=\mathcal{V}$. The network achieves \textit{multiconsensus} if, for any initial state $\mathbf{x}(0)\in \mathbb{R}^{n}$, the final opinions of agents in each $\mathcal{V}_h$ satisfy:
%A set of agents $\mathcal{C}\subset \mathcal{V}$ form an opinion cluster if 
\begingroup
\setlength{\abovedisplayskip}{2pt}
\setlength{\belowdisplayskip}{2pt}
\begin{align}
\label{eqn:opinion_cluster}
    \lim_{k \to \infty} x_i(k)-x_j(k)&=0 \qquad \forall \ i,j\in \mathcal{V}_h 
\end{align}
\endgroup
for each $h\in[z]$. A set of agents that satisfy eqn. \eqref{eqn:opinion_cluster} for all initial states is called an \textit{opinion cluster}. Clearly, each subgroup $\mathcal{V}_h$ forms an {opinion cluster}.
%where $x_k^*$ denotes the final opinion of agent $k\in \mathcal{V}$.    
\end{definition}

Under multiconsensus, two or more different subgroups can converge to the same final opinion. When the final opinions of agents in different subgroups are always distinct, \textit{cluster consensus} occurs. Therefore, under cluster consensus, in addition to eqn. \eqref{eqn:opinion_cluster}, the following holds:
\begingroup
\setlength{\abovedisplayskip}{2pt}
\setlength{\belowdisplayskip}{2pt}
\begin{align}
\label{eqn:distinct_op}
  \lim_{k \to \infty} x_i(k) -x_j(k)&\neq 0 \qquad \text{if } \  i\in \mathcal{V}_h ,j\notin \mathcal{V}_h,  \quad \forall h\in\ [z]
\end{align}
\endgroup
%In this framework, a subgroup of agents converging to the same opinion is called an opinion cluster.

%Thus, . %not only the agents within each subgroup have the same opinion, but agents in two different subgroups must have different final opinion. U
%

%Opinion clustering refers to the formation of one or more opinion clusters in the network. Two special cases of opinion clustering are defined as follows:

%It can be of the following types:

% Therefore,  we examine the topology-based conditions that result in cluster consensus.
 
 %While homophily is a major cause of echo chamber formation, . While 

%Under the FJ model, the final opinion of each agent is governed by the initial opinions of the influential agent and lies in the convex hull formed by them.Consider a pair of agents $i$ and $j$ such that each influential agent in the network has an equal impact on both. Then, their final opinions satisfy eqn. \eqref{eqn:opinion_cluster} and converge to the same opinion, regardless of the initial opinions. From eqn. \eqref{eqn:final_op_with_oblivious}, the impact of the influential agents on others depends on network topology, edge weights and stubbornness. Thus, byfor a suitably choosing these parameters a network can be designed where a set of agents form an opinion cluster. However, identifying such parameters is generally a challenging problem.
%However, identifying the conditions is non-trivial.
 %, thereby identify the conditions under which each influential agent has an equal impact

In \cite{yao2022cluster}, the authors derive topology-based conditions for cluster consensus under the FJ model, while assuming that each cycle in the digraph contains atmost one stubborn agent. Consequently, these conditions do not explain the emergence of cluster consensus in general digraphs. In this work, our key objective is to address this limitation by deriving the generalised conditions under which the agents achieve cluster consensus under the FJ model. %without considering this assumption.
 \vspace{-5pt}
\section{Conditions for Cluster Consensus}
\label{Sec:MR}
%In this section, we begin by defining the LTP agents, first introduced in our work \cite{}. Thereafter, leveraging the notion of LTP agents,  we present a partitioning of the network that achieves cluster consensus.
 
%introduce the notion of LTP agents and describe their salient properties in the next section.we employ the  {Kron Reduction} technique to derive topology-based %sufficient and necessary conditions under which a set of agents in a network form an opinion cluster %have cluster consensus %Consider a set of agents in $\mathcal{G}$ such that under the FJ model. Thereafter, 

In this section, we begin by exploring the conditions that lead to a set of agents forming an opinion cluster under the FJ model. Thereafter, we present the necessary and sufficient conditions for a network to achieve cluster consensus. %begin by stating the conditions under which 
\vspace{-10pt}
\subsection{Locally Topologically persuasive  (LTP) agent}
\label{subsec:LTP}
%In the FJ framework, the final opinion of each agent is governed by the initial opinions of the influential agents and lies in the convex hull formed by them.
%\textcolor{red}{Consider a set of agents $\mathcal{C} \subset \mathcal{V}$. Suppose each influential agent in the network has an equal impact on the final opinion of each agent in a set $\mathcal{C}$. Then, the final opinions of all agents in $\mathcal{C}$ are equal, regardless of the initial opinions, and $\mathcal{C}$ forms an opinion cluster.} % In , the authors show that
{As discussed in Sec. \ref{sec:FJ_model}, under the FJ model, the final opinion of each agent is shaped by the influential agents. In \cite{GHADERI20143209}, the authors show that the impact of a stubborn agent on the final opinion of another agent depends on the weights of edges on each path from the former to the latter. Hence, the final opinion of an agent is determined by the network interconnections and the associated edge weights.}
%The formation of an opinion cluster by

{A subset of agents $\mathcal{C} \subset \mathcal{V}$ forms an opinion cluster if  their final opinions are equal, for any choice of initial opinions. Here, the exact value to which the final opinion of all agents in $\mathcal{C}$ converge is not of concern. Can simple topology-based conditions ensure that agents in $\mathcal{C}$ form opinion cluster, regardless of the edge weights in the network?} %does not depend on the exact value However, for the formation of an opinion cluster by the agents in $\mathcal{C}$, their final opinions these agents must be equal, but exact value is not of concern.
%exact value of the final opinion depends on the imNote that The formation of  $\mathcal{C}$ requires that the impact of each influential agent must be equal on each agent in $\mathcal{C}$ but does not depend on its exact value. 
%depends on both the paths and respective edge weights, formation of an opinion cluster requires this impact to be equal for each agent in $\mathcal{C}$, independent of the exact value of the impact.
%Can this impact be equal just based on the network topology, regardless of the edge weights.\textcolor{magenta}{However, the topological conditions on the paths that ensure that the impact of each influential agent on distinct agents is equal, leading to the formation of an opinion cluster, remains unknown}. 
In \cite{shrinate2025opinionclusteringfriedkinjohnsenmodel}, we answer this question positively by introducing the notion of an LTP agent and those it persuades. Moreover, we established that the `nice' topological properties of these agents result in the formation of an opinion cluster.

\begin{definition} \cite{shrinate2025opinionclusteringfriedkinjohnsenmodel}
\label{defn:LTP}%Consider an agent $p$ and a non-influential agent $q$ 
%Consider an agent $p$ in network $\mathcal{G}$. 
An agent $p\in \mathcal{V}$ is called an {LTP agent} if the following hold:
\begin{enumerate}
    \item[(i)] there is an agent $q\in \mathcal{V}\setminus \mathcal{I}$ such that every path in $\mathcal{G}$ {(that exists)} from an influential agent $s\in \mathcal{I}$ to $q$ (where $p \neq q$) traverses $p$,
    \item[(ii)]  {additionally, if $p \in  \mathcal{V}\setminus \mathcal{I}$, an agent $c\in \mathcal{V}$ such that all paths from each agent $s \in \mathcal{I}$ to $p$ traverse $c$, does not exist.}
\end{enumerate}
If $p$ is an LTP agent, then the agent $q$ satisfying condition (i) is said to be {\textit{persuaded by}} $p$. The set of all non-influential agents persuaded by an LTP agent $p$ is denoted by $\mathcal{N}_p$.
\end{definition}

  Note that if $p\in \mathcal{I}$, condition (i) is sufficient to ensure that $p$ is an LTP agent. Condition (ii) in Defn. \ref{defn:LTP} is applicable only if $p$ is a non-influential agent.  It conveys that a non-influential agent $p$, persuaded by an LTP agent $c$ (\textit{i.e} $p\in \mathcal{N}_{c}$), is not an LTP agent itself.} 
The following example illustrates an LTP agent in a network and those it persuades.
\begin{figure}[h]
    \centering
 \begin{subfigure}{0.17\textwidth}
        \centering
    \includegraphics[width=0.75\linewidth]{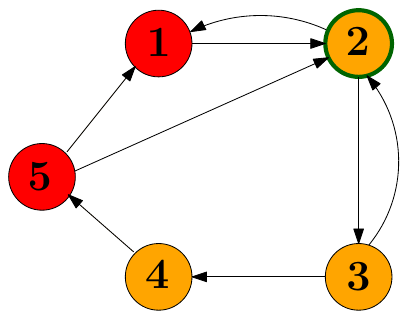}
    \caption{Network $\mathcal{G}$ has stubborn agents $1$ and $5$ }
    \label{Fig:network_g_LTP}
    \end{subfigure}
    \begin{subfigure}{0.13\textwidth}
        \centering
    \includegraphics[width=1\linewidth]{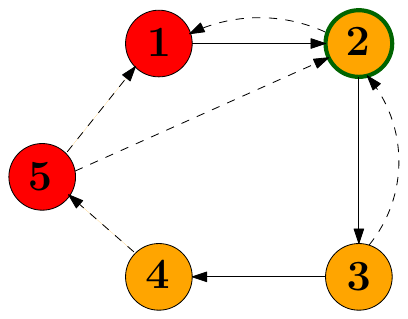}
    \caption{Paths from $1$ to $3$ (and $4$) traverse $2$}
       \label{Fig:LTP_expm_1}
    \end{subfigure}
    \hfill
     \begin{subfigure}{0.13\textwidth}
        \centering
    \includegraphics[width=1\linewidth]{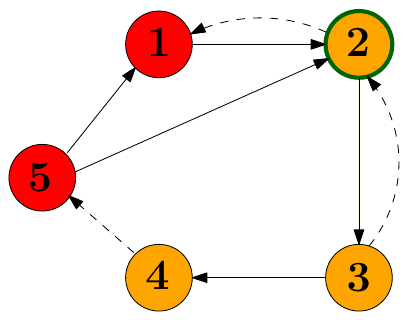}
    \caption{Paths from $5$ to $3$ (and $4$) traverse $2$}
       \label{Fig:LTP_expm_2}
    \end{subfigure}
    \caption{In the given figure, agent $2$ is an LTP agent with $\mathcal{N}_2=\{3,4\}$.}
    \label{fig:LTP_example}
\end{figure}
\begin{expm}
\label{expm:2}
 Consider the network $\mathcal{G}$ shown in Fig. \ref{Fig:network_g_LTP}. $\mathcal{G}$ is a strongly connected network, hence, only the stubborn agents $1$ and $5$ are influential. Fig \ref{Fig:LTP_expm_1} demonstrates that each path in $\mathcal{G}$ from $1$ to $3$ (and $4$) traverses $2$ (highlighted by solid lines). Fig. \ref{Fig:LTP_expm_2} shows that the paths from $5$ to $3$ (and $4$) also traverse $2$. Thus, for agent $2$, condition i) in Defn. \ref{defn:LTP} holds. Further,
 since each path from $1$ and $5$ to $2$ do not traverse any common node, hence, condition ii) in Defn. \ref{defn:LTP} also holds. Thus, $2$ is an LTP agent and $\mathcal{N}_2=\{3,4\}$. %Similarly, $6$ is also an LTP agent and $\mathcal{N}_6=\{1\}$.
Note that each path from the stubborn agents $1$ and $5$ to $4$ traverses $3$ as well, however $3$ is not an LTP agent because $3\in \mathcal{N}_2$.
\end{expm}

%Since the existence of an LTP agent depends only on the paths from the influential agent to a non-influential agent, it is a graph topological property.
By definition, an LTP agent and those persuaded by it have the following additional properties:
% Next, we present some salient properties of the LTP agents:
 \begin{enumerate}[leftmargin=0pt, itemindent=1.5em]
     \item \textit{In-neighbours}: Let $p$ be an LTP agent $p$ and $q \in \mathcal{N}_p$, then each in-neighbour of $q$ lies in the set $\mathcal{N}_p \cup \{p\}$; otherwise, a path exists from an influential agent that does not traverse $p$. On the contrary, $p$ can have any agent in $\mathcal{G}$ as its in-neighbour.
     \item \textit{Unique persuaded agents:} If a network has multiple LTP agents, the sets of agents persuaded by any two distinct LTP agents are disjoint \cite{shrinate2025opinionclusteringfriedkinjohnsenmodel}. Therefore, excluding agents that are neither an LTP agent nor are persuaded by any LTP agent, we can uniquely partition the rest into disjoint groups, with each group consisting of one LTP agent and its persuaded agents.
     \item \textit{LTP agents vs cut vertices:} The notion of an LTP agent is similar to that of a cut vertex in undirected graphs (and strong articulation points in digraphs). A node $v$ is a cut vertex (or a strong articulation point in a strongly connected digraph) if and only if there are distinct nodes $x$ and $y$ such that every path from $x$ to $y$ traverses $v$ \cite{ITALIANO201274}. Since LTP agents are defined for digraphs, each LTP agent in a strongly connected digraph is a strong articulation point; however, the converse does not hold. (e.g., node~$3$ in Example~\ref{expm:2}).

 \end{enumerate}

\begin{lemma}(\cite{shrinate2025opinionclusteringfriedkinjohnsenmodel})
\label{lemma:ACC}
   Consider a network $\mathcal{G}=(\mathcal{V},\mathcal{E})$ with $m\geq 1$ stubborn agents such that the Assumption \ref{assump:ns} holds. Let the opinions of agents in $\mathcal{G}$ be governed by the FJ model \eqref{eq:FJ_opinion_dynamics}. If $p\in \mathcal{V}$ is an LTP agent, then $p$ and  the nodes in set $\mathcal{N}_p$ form an opinion cluster. 
\end{lemma}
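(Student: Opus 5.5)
We want to show that $x_q^*=x_p^*$ for every $q\in\mathcal{N}_p$ and every initial state $\mathbf{x}(0)$. The plan is to work directly with the fixed-point equations of the final opinions rather than with the closed form in Lemma \ref{lm:final_op}. By Lemma \ref{lm:final_op}, under Assumption \ref{assump:ns} the opinions converge. Taking limits in \eqref{eq:FJ_opinion_dynamics}, the limit $\mathbf{x}^*$ satisfies
\begin{align*}
x_i^*=(1-\beta_i)\sum_{j} w_{ij}x_j^*+\beta_i x_i(0), \qquad i\in\mathcal{V}.
\end{align*}
Every $q\in\mathcal{N}_p$ is non-influential. In particular it is not stubborn, so $\beta_q=0$, and it is not an oblivious-iSCC agent. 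Hence $x_q^*=\sum_j w_{qj}x_j^*$. By the in-neighbour property noted after Definition \ref{defn:LTP}, every in-neighbour of $q$ lies in $\mathcal{N}_p\cup\{p\}$. The equations restricted to $\mathcal{N}_p$ therefore read
\begin{align*}
\mathbf{x}^*[\mathcal{N}_p]=W[\mathcal{N}_p]\,\mathbf{x}^*[\mathcal{N}_p]+W[\mathcal{N}_p,\{p\}]\,x_p^*,
\end{align*}
with row sums $W[\mathcal{N}_p]\mathbb{1}+W[\mathcal{N}_p,\{p\}]=\mathbb{1}$. Equivalently, $Q[\mathcal{N}_p]\,\mathbf{x}^*[\mathcal{N}_p]=W[\mathcal{N}_p,\{p\}]\,x_p^*$, where $Q$ is the loopy Laplacian built from $W$.

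Next we show that $I-W[\mathcal{N}_p]$ (equivalently $Q[\mathcal{N}_p]$) is invertible. By Lemma \ref{lm:Basic_properties}(1) applied with $\alpha^c=\mathcal{N}_p$, it suffices that each $q\in\mathcal{N}_p$ is reachable from $p$. Since $q$ is not oblivious, some influential agent $s$ has a path to $q$, and by condition (i) that path traverses $p$. Its final segment is then a path from $p$ to $q$, and by the in-neighbour property that segment stays inside $\mathcal{N}_p\cup\{p\}$.

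With invertibility in hand, $\mathbf{x}^*[\mathcal{N}_p]=(I-W[\mathcal{N}_p])^{-1}W[\mathcal{N}_p,\{p\}]\,x_p^*$. The row-sum identity gives $(I-W[\mathcal{N}_p])\mathbb{1}=W[\mathcal{N}_p,\{p\}]$, so $(I-W[\mathcal{N}_p])^{-1}W[\mathcal{N}_p,\{p\}]=\mathbb{1}$. Hence $\mathbf{x}^*[\mathcal{N}_p]=x_p^*\mathbb{1}$ for every initial state, which is exactly \eqref{eqn:opinion_cluster}. In the Kron-reduction language, eliminating $\mathcal{N}_p$ leaves each of its nodes tied only to $p$ by a loopless Laplacian row, whose kernel consists of consensus vectors.

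The main obstacle is the invertibility step. It needs care in the corner cases: $p$ itself may be an influential agent in an oblivious iSCC, or $\mathcal{N}_p$ may contain nodes with cycles among themselves. The argument must use that no node of $\mathcal{N}_p$ is oblivious, since every path from an influential agent to it must exist and pass through $p$. That is what rules out an isolated closed class inside $\mathcal{N}_p$, which would otherwise make $W[\mathcal{N}_p]$ have spectral radius one. Condition (ii) of Definition \ref{defn:LTP} is not needed for this lemma; it only ensures that the persuaded sets are well defined and disjoint.
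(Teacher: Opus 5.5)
Your core argument is correct, and it takes a more local route than the paper's. One supporting claim, however, is false and needs replacing.

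\textbf{The false claim.} You justify the existence of a path from an influential agent to $q\in\mathcal{N}_p$ by saying $q$ is not oblivious. Your last paragraph also asserts that no node of $\mathcal{N}_p$ is oblivious. This fails in exactly the corner case you flag. If $p$ lies in an oblivious iSCC, or is an oblivious agent outside the iSCCs, then every $q\in\mathcal{N}_p$ is oblivious: a stubborn agent reaching $q$ would, by condition (i), have to pass through $p$ and hence reach $p$.

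\textbf{What you actually need.} It suffices that every non-influential agent is reachable from some influential agent. For oblivious $q$ this holds because oblivious agents have only oblivious in-neighbours. Tracing back from $q$ inside the oblivious subgraph therefore ends in a source component of its condensation. That component is an oblivious iSCC of $\mathcal{G}$, hence influential, and it is not $q$'s own component because $q$ is non-influential. The same fact, rather than non-obliviousness, is what excludes a closed class inside $\mathcal{N}_p$: such a class would itself be an oblivious iSCC, i.e.\ influential.

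\textbf{A smaller point on $Q$.} With the paper's definition $Q=L+\operatorname{diag}(w_{11},\dots,w_{nn})$ and row-stochastic $W$, one has $Q=I-W+\operatorname{diag}(w_{11},\dots,w_{nn})$. So $Q[\mathcal{N}_p]\neq I-W[\mathcal{N}_p]$ whenever some $q\in\mathcal{N}_p$ has a self-loop, and your ``equivalently'' equation is then off by $w_{qq}x_q^*$. There are two easy fixes:
\begin{itemize}
\item apply Lemma~\ref{lm:Basic_properties}(1) to $I-W$, viewed as the loopless Laplacian with self-loops discarded (this is how the paper treats $R$); or
\item argue directly that $W[\mathcal{N}_p]$ is row-substochastic, with every node reachable inside $\mathcal{N}_p$ from a row of deficient sum (a node having $p$ as in-neighbour), so $\rho(W[\mathcal{N}_p])<1$.
\end{itemize}
With these repairs, the reachability-from-$p$ step, invertibility of $I-W[\mathcal{N}_p]$ and the row-sum identity give $\mathbf{x}^*[\mathcal{N}_p]=x_p^*\mathbb{1}$ as you claim. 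You are also right that condition (ii) of Definition~\ref{defn:LTP} is never used.

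\textbf{Comparison with the paper's route.} The paper does not reprove this lemma; it imports it from its earlier work. Its own machinery is in the Appendix proof of C1, into which Remark~1 folds the non-stubborn LTP case. That machinery is Kron reduction of the augmented Laplacian $R$ on $\mathcal{G}(R)$, which has extra source nodes carrying $\mathbf{x}_s(0)$. All agents outside the cluster and outside the oblivious iSCCs are eliminated. Consensus is then read off from the rank-one structure of the reduced coupling blocks together with zero row sums.

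Your argument never eliminates the rest of the network. The in-neighbour property makes the equations on $\mathcal{N}_p$ closed once the single boundary value $x_p^*$ is fixed, so you never need to know what $x_p^*$ is. Stubborn, non-stubborn and oblivious $p$ are handled uniformly, and only existence of the limit is used.

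What the Kron route buys is an explicit expression for the cluster's final opinions in terms of $\mathbf{x}_s(0)$ and the oblivious consensus values. The paper needs this for the converse directions: the rank condition in C1, and the at-most-one-stubborn-agent and LTP-necessity arguments in C2. Your local argument gives sufficiency only.
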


In the next set of results, we generalise the sufficient condition presented in Lemma \ref{lemma:ACC} for the formation of an opinion cluster. %following results, we further generalise this result.
\vspace{-15pt}
\subsection{Formation of opinion clusters}
Consider a network $\mathcal{G}$ with $m$ stubborn agents and $u\geq 0$ oblivious iSCCs. It is already known that the agents in each oblivious iSCC evolve under the DeGroot model and form an opinion cluster \cite{bullo}. \textit{How do the remaining agents, who are not a part of any oblivious iSCCs, behave? Since the FJ model predominantly explains disagreement, that is the most expected outcome. Interestingly, we show that the system dynamics coupled with the network topology can lead to opinion clustering within the larger group, a special kind of disagreement.} In this subsection, we aim to ascertain the conditions under which a subset of agents $\mathcal{C}\subseteq \mathcal{V}$ form an opinion cluster. Without loss of generality, we index the agents such that:
(i) the agents in $\mathcal{C}$ have indices $\{1,\dots,|\mathcal{C}|\}$,
(ii) they are followed by agents (if any) that are neither in $\mathcal{C}$ nor in any oblivious iSCCs, and
(iii) the remaining agents belong to the $u$ oblivious iSCCs, with agents in the same iSCC assigned consecutive indices.
With this ordering, the opinion vector is partitioned as
\begingroup
\setlength{\abovedisplayskip}{2pt}
\setlength{\belowdisplayskip}{2pt}
\[
{\mathbf{x}}(k)=
\begin{bmatrix}
\hat{\mathbf{x}}_1(k) &
\hat{\mathbf{x}}_2(k)&
\cdots&
\hat{\mathbf{x}}_{u+2}(k)
\end{bmatrix}^T
\]
\endgroup

where $\hat{\mathbf{x}}_1(k)$  {denotes the opinions} of agents in $\mathcal{C}$,
$\hat{\mathbf{x}}_2(k)$  {denotes the opinions} of agents neither in $\mathcal{C}$ nor in any oblivious iSCC,
and $\hat{\mathbf{x}}_i(k)$  {denotes the opinions} of the oblivious agents in the $(i-2)^{th}$ iSCC for $i\in \{3,...,u+2\}$. 
%We renumber the agents in $\mathcal{G}$ such that the agents in $\mathcal{C}$ are indexed $\{1,...,|\mathcal{C}|\}$, followed by those agents not any iSCCs of oblivious agents and which are followed by oblivious agents in iSCCs such that those belonging to the same iSCC have consecutive indices. 
Then, we can decompose eqn. \eqref{eq:FJ_opinion_dynamics} as:
\begingroup
\setlength{\abovedisplayskip}{2pt}
\setlength{\belowdisplayskip}{2pt}
\begin{align*}
 \hat{\mathbf{x}}_1(k+1)&= (I-\beta_{11})( \sum_{j=1}^{u+2}\hat{W}_{1j} \hat{\mathbf{x}}_j(k)   )+\beta_{11} \hat{\mathbf{x}}_1(0)  \\
  \hat{\mathbf{x}}_2(k+1)&= (I-\beta_{22})(\sum_{j=1}^{u+2}\hat{W}_{2j} \hat{\mathbf{x}}_j(k))+\beta_{22} \hat{\mathbf{x}}_2(0)
\end{align*}
\endgroup

\begingroup
\setlength{\abovedisplayskip}{2pt}
\setlength{\belowdisplayskip}{2pt}
\begin{align}
  \hat{\mathbf{x}}_3(k+1)&=\hat{W}_{33} \hat{\mathbf{x}}_3(k) \nonumber \\
 &\vdots \nonumber \\
\hat{\mathbf{x}}_{u+2}(k+1)&=\hat{W}_{(u+2)(u+2)} \hat{\mathbf{x}}_{u+2}(k) \label{eqn:2_cluster_ss}
\end{align}
\endgroup
%where $\mathbf{x}_1$ denote the opinions of agents in $\mathcal{C}$, $\mathbf{x}_2$ denote the opinions of agents neither in any oblivious iSCCs nor in $\mathcal{C}$ and $\mathbf{x}_i$ denotes opinions of oblivious agents in $i^{th}$ iSCC. Similarly,  Equivalently,
where $\beta_{11}$ and $\beta_{22}$ are diagonal matrices that denote the stubbornness of agents in $\mathcal{C}$ and those neither in $\mathcal{C}$ nor any oblivious iSCCs, respectively. Based on the new indexing, $W$ is partitioned as:
\begingroup
\setlength{\abovedisplayskip}{2pt}
\setlength{\belowdisplayskip}{2pt}
\begin{align*}
W=\begin{bmatrix}
    \hat{W}_{11} & \hat{W}_{12} & \hat{W}_{13} & \cdots & \hat{W}_{1(u+2)} \\
     \hat{W}_{21} & \hat{W}_{22} & \hat{W}_{23} & \cdots & \hat{W}_{2(u+2)} \\
     \mathbb{0} & \mathbb{0} &  \hat{W}_{33} & \mathbb{0} & \mathbb{0}\\
     \vdots & \vdots & \cdots & \cdots & \vdots\\
      \mathbb{0} & \mathbb{0} &  \mathbb{0} & \mathbb{0} & \hat{W}_{(u+2)(u+2)}\\
\end{bmatrix}    
\end{align*}
\endgroup
%As discussed in \ref{subsec:LTP},\textcolor{magenta}{where?}, the network topology plays a key role in the formation of final opinions under the FJ model. 
%However, 
Lemma \ref{lemma:ACC} showed that if $\mathcal{C}$ is composed of an LTP agent and its persuaded agents, it forms an opinion cluster for any value of the edge weights and stubbornness. The following result explores the following: \textit{Is the LTP-based condition both necessary and sufficient to form an opinion cluster?} \textit{If not, can we obtain even more general conditions for forming opinion clusters?}
%The following result The following result presents conditions based on topology and edge weights (of certain incoming edges to agents in $\mathcal{C}$) in the network under which the agents in $\mathcal{C}$ form an opinion cluster. \textcolor{magenta}{This is still a weak motivation, rephrase it.}

\begin{thm}
\label{thm:WC}
    Consider a network $\mathcal{G}=(\mathcal{V},\mathcal{E})$ with $m\geq 1$ stubborn agents labelled as $s_1,...,s_m$ that satisfies Assumption \ref{assump:ns}. Let the opinions of agents in $\mathcal{G}$ be governed by the FJ model \eqref{eq:FJ_opinion_dynamics}. Then, a set of agents $\mathcal{C}\subseteq\mathcal{V}$ forms an opinion cluster (equivalently, satisfies eqn. \eqref{eqn:opinion_cluster}) if and only if, any one of the following conditions hold:  
 \begin{enumerate}
     \item[C1.]  $\mathcal{C}$ does not contain any stubborn agents and  \begingroup
\setlength{\abovedisplayskip}{2pt}
\setlength{\belowdisplayskip}{2pt}
\begin{equation}
     \label{eqn:rank_condition}
\begin{aligned}
\operatorname{rank} \Big( \big[ 
    &\hat{W}_{13}\mathbb{1} + Z(I-\beta_{22})\hat{W}_{23}\mathbb{1} \quad \dots \\
    &\hat{W}_{1(u+2)}\mathbb{1} + Z(I-\beta_{22})\hat{W}_{2(u+2)}\mathbb{1} \quad  Z\psi \big] \Big) = 1
\end{aligned}
\end{equation}
\endgroup
   where $Z=W_{12}\big(I-(I-\beta_{22})\hat{W}_{22}\big)^{-1}$ and $\psi=[\psi]_{ij}$ is a matrix with entry $\psi_{ij}=\beta_i$ if  stubborn agent $i\notin\mathcal{C}$ has label $s_j$, otherwise $\psi_{ij}=0$.
     
     \begin{comment}
         
     $\operatorname{rank}\bigg(\begin{bmatrix}
            \hat{W}_{12} & \hat{W}_{13} \mathbb{1} & \cdots & \hat{W}_{1(u+2)} \mathbb{1}
         \end{bmatrix}\bigg)=1$ 
     \end{comment}

     \item[C2.] $\mathcal{C}$ contains a unique stubborn agent $p\in \mathcal{V}$ which is an LTP agent and $\mathcal{C}=\{p\} \cup \mathcal{N}_p$.
 \end{enumerate}   
\end{thm}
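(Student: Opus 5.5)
The plan is to work throughout with the steady-state (Kron-reduced) form of the final opinions. By Lemma~\ref{lm:final_op}, $\mathbf{x}^*=F\,\mathbf{x}(0)$, where each row of $F$ is a vector of nonnegative weights summing to $1$. These weights are supported on the influential agents: the stubborn agents $s_1,\dots,s_m$, and the oblivious iSCCs, whose consensus values $\hat W_{jj}^*\hat{\mathbf{x}}_j(0)$ can be chosen independently of each other and of the stubborn initial opinions. Hence $\mathcal{C}$ forms an opinion cluster if and only if the rows of $F[\mathcal{C},\cdot]$ coincide. At steady state,
\begin{align*}
\hat{\mathbf{x}}_1^*&=(I-\beta_{11})\big(\hat W_{11}\hat{\mathbf{x}}_1^*+\hat W_{12}\hat{\mathbf{x}}_2^*+\textstyle\sum_{j\ge 3}\hat W_{1j}\hat{\mathbf{x}}_j^*\big)+\beta_{11}\hat{\mathbf{x}}_1(0),\\
\hat{\mathbf{x}}_2^*&=(I-\beta_{22})\big(\hat W_{21}\hat{\mathbf{x}}_1^*+\hat W_{22}\hat{\mathbf{x}}_2^*+\textstyle\sum_{j\ge 3}\hat W_{2j}\hat{\mathbf{x}}_j^*\big)+\beta_{22}\hat{\mathbf{x}}_2(0).
\end{align*}
I will eliminate $\hat{\mathbf{x}}_2^*$ by a Schur complement. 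The matrix $I-(I-\beta_{22})\hat W_{22}$ is invertible by Lemma~\ref{lm:Basic_properties}(1), since every agent of the second block is reached by a path from a stubborn agent or from $\mathcal{C}$. This yields a reduced system on $\mathcal{C}$ of the form $(I-A)\hat{\mathbf{x}}_1^*=B\mathbf{u}+\beta_{11}\hat{\mathbf{x}}_1(0)$. Here $\mathbf{u}$ stacks the oblivious consensus values and the stubborn initial opinions outside $\mathcal{C}$, and the columns of $B$ are (up to the factor $I-\beta_{11}$) exactly the columns appearing in \eqref{eqn:rank_condition}. Row-stochasticity of $W$ gives the key identity $(I-A)\mathbb{1}=B\mathbb{1}+\beta_{11}\mathbb{1}$.

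\emph{Case C1 ($\beta_{11}=0$).} Suppose first that $\hat{\mathbf{x}}_1^*=c\,\mathbb{1}$ for every $\mathbf{u}$. Then $B\mathbf{u}=c(\mathbf{u})(I-A)\mathbb{1}=c(\mathbf{u})B\mathbb{1}$ for all $\mathbf{u}$, so every column of $B$ is parallel to $B\mathbb{1}$. Moreover $B\neq 0$, because $\mathcal{C}$ is non-oblivious and hence reachable from some influential agent. Together these give rank $1$. Conversely, if rank is $1$, write $B=\mathbf{b}\mathbf{r}^T$. Then $B\mathbf{u}=\frac{\mathbf{r}^T\mathbf{u}}{\mathbf{r}^T\mathbb{1}}B\mathbb{1}=\frac{\mathbf{r}^T\mathbf{u}}{\mathbf{r}^T\mathbb{1}}(I-A)\mathbb{1}$, and invertibility of $I-A$ gives $\hat{\mathbf{x}}_1^*\in\operatorname{span}(\mathbb{1})$. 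The remaining thing to check is $\mathbf{r}^T\mathbb{1}\neq0$, which holds by nonnegativity of the entries of $B$.

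\emph{Case with stubborn agents in $\mathcal{C}$.} Sufficiency of C2 is Lemma~\ref{lemma:ACC}. For necessity I use the absorbing random-walk interpretation of $F$ from \cite{GHADERI20143209}: $F_{ji}$ is the probability that the walk started at $j$ is absorbed at stubborn agent $i$. Conditioning on the first visit to $i$ gives $F_{ji}=h_{ji}F_{ii}$, where $h_{ji}$ is the probability of reaching $i$ before absorption elsewhere.

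First I show that $\mathcal{C}$ contains at most one stubborn agent. Suppose $s,t\in\mathcal{C}$ are both stubborn. Equal rows force $F_{ts}=F_{ss}$ and $F_{st}=F_{tt}$, hence $h_{ts}=h_{st}=1$. But a walk from $s$ is absorbed at $s$ itself with probability at least $\beta_s>0$ before it can reach $t$, which contradicts $h_{st}=1$.

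Next, let $p$ be the unique stubborn agent in $\mathcal{C}$ and take $q\in\mathcal{C}\setminus\{p\}$.
\begin{itemize}
\item Condition (i) of Definition~\ref{defn:LTP}: if some influential $s\neq p$ had a path to $q$ avoiding $p$, I would compare $F_{qp}=h_{qp}F_{pp}$ with $F_{pp}$. That path makes $h_{qp}<1$, hence $F_{qp}<F_{pp}$, contradicting equal rows.
\item The same comparison shows $q$ is not influential: $q$ cannot be stubborn by uniqueness of $p$, and it cannot be oblivious, since then its row would be supported off $p$.
\end{itemize}
Thus every influential path to $q$ traverses $p$, so $p$ is LTP and $\mathcal{C}\subseteq\{p\}\cup\mathcal{N}_p$. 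For the reverse inclusion I would use the in-neighbour property of $\mathcal{N}_p$ together with Lemma~\ref{lemma:ACC}. Since $\mathcal{C}$ is taken as a given subgroup, I expect the statement to be read with $\mathcal{C}$ maximal; otherwise this step is a matter of convention.

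The main obstacle is this last necessity argument. Specifically, I need to turn ``some influential path avoids $p$'' into the strict inequality $h_{qp}<1$. This requires that the avoiding walk has positive probability of absorption at an agent other than $p$, or of being trapped in an oblivious iSCC, without first returning to $p$. I would handle it by truncating the path at its first visit to $p$ and using positivity of the edge weights along it.
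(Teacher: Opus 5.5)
Your argument is correct in substance. It matches the paper for C1 but takes a genuinely different route for the necessity half of C2.

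\textbf{C1.} Here you follow the paper. Eliminating the agents outside $\mathcal{C}$ and outside the oblivious iSCCs is exactly the Kron reduction $R/\alpha^c$. Your identity $(I-A)\mathbb{1}=B\mathbb{1}+\beta_{11}\mathbb{1}$ is the zero-row-sum property of the loopless Laplacian $R/\alpha^c$ that the paper uses for sufficiency. You do still need to justify that $I-A$ is invertible, e.g.\ as the Schur complement of $M$ in the nonsingular matrix $I-(I-\beta)W$ restricted to agents outside oblivious iSCCs.

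\textbf{C2 necessity, paper's route.} The paper stays algebraic. In the extended digraph $\mathcal{G}(R)$, each stubborn initial opinion is a source with a single edge into its agent. Two stubborn agents in $\mathcal{C}$ would therefore make two columns of $H=({}^{\alpha}R_{11})^{-1}$ multiples of $\mathbb{1}$, contradicting invertibility. With a unique stubborn agent, ${}^{\alpha}R_{11}h_{11}\mathbb{1}=\mathbb{e}_1$ and the Laplacian sign pattern force every row except the first of ${}^{\alpha}R_{1j}$, $j\ge 2$, to vanish. The path characterisation of Kron-reduced edges (Lemma~\ref{lm:Basic_properties}, statement 3) then converts this into condition (i).

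\textbf{C2 necessity, your route.} You use absorption probabilities and the factorisation $F_{ji}=h_{ji}F_{ii}$. This gives the explicit bound $F_{st}\le(1-\beta_s)F_{tt}<F_{tt}$, plus an escape trajectory along a $p$-avoiding path. Your version is more elementary and self-contained, since it needs no Kron reduction of digraphs, and it shows quantitatively why the influences differ. The paper's version reuses one reduced matrix for both C1 and C2 and reads the topology directly off the reduced graph, with no trajectories to construct.

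\textbf{Two repairs.}

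First, the escape step should truncate the avoiding path at its \emph{last influential agent} $s$ before $q$, not ``at its first visit to $p$''; the path never visits $p$. With that choice, every intermediate agent is non-stubborn. The walk from $q$ then follows the reversed path with probability equal to the (positive) product of the edge weights. After that it is either absorbed at $s$ with probability $\beta_s>0$ (if $s$ is stubborn) or stays forever in the oblivious iSCC containing $s$. Hence $h_{qp}<1$.

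Second, the reverse inclusion $\{p\}\cup\mathcal{N}_p\subseteq\mathcal{C}$ cannot follow from the in-neighbour property and Lemma~\ref{lemma:ACC}, because every subset of an opinion cluster is again an opinion cluster. The paper's proof likewise only establishes $\mathcal{C}\subseteq\{p\}\cup\mathcal{N}_p$. Both arguments also tacitly assume $|\mathcal{C}|\ge 2$: a stubborn singleton is trivially a cluster even when it is not LTP. So C2 must be read as an inclusion for non-singleton $\mathcal{C}$, which is exactly what you prove.
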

The proof of Theorem \ref{thm:WC} is given in the Appendix.

\begin{remark}
{Theorem \ref{thm:WC} shows that any set of agents $\mathcal{C}$ that forms an opinion cluster can have atmost one stubborn agent. C2.) is the necessary and sufficient condition that ensures that a set $\mathcal{C}$ containing a stubborn agent forms an opinion cluster. This condition is purely based on graph topology as it depends only on the paths from the influential agents outside $\mathcal{C}$ to non-influential agents within $\mathcal{C}$. Hence, the agents satisfying C2.) form an opinion cluster independent of any edge weights in the network and stubbornness values. }

On the contrary, C1.) depends on the edge weights of all interactions except those within the cluster (denoted by $\hat{W}_{11}$) and on the stubbornness $\beta_{22}$. Interestingly, eqn. \eqref{eqn:rank_condition} holds under the following rank condition on the edge weights of the incoming edges to $\mathcal{C}$ from agents not in $\mathcal{C}$, \begingroup
\setlength{\abovedisplayskip}{2pt}
\setlength{\belowdisplayskip}{2pt}
\begin{align}
\label{eqn:rank_condition_sc}
    \operatorname{rank}\bigg(\begin{bmatrix}
            \hat{W}_{12} & \hat{W}_{13} \mathbb{1} & \cdots & \hat{W}_{1(u+2)} \mathbb{1}
         \end{bmatrix}\bigg)=1
\end{align}
\endgroup
Thus, eqn. \eqref{eqn:rank_condition_sc} becomes a special case of C1.) that depends only on a limited number of edge weights in the network. %Under this condition, the agents in $\mathcal{C}$ form an opinion cluster if the weights of incoming edges to agents in $\mathcal{C}$ from any two agents not in $\mathcal{C}$ nor in oblivious iSCCs have the same ratio. Similarly, 

%such that the weights of incoming edges from agents is in the same ratio, form an opinion cluster that have incoming  from the rest

Additionally, as shown in \cite{shrinate2025opinionclusteringfriedkinjohnsenmodel}, a set $\mathcal{C}=\{p\} \cup \mathcal{N}_p$ composed only of non-stubborn agents, where $p$ is a non-stubborn LTP agent, also forms an opinion cluster.
This condition is also a special case of C1.) because only $p$ can have incoming edges from agents not in $\mathcal{C}$. WLOG, we can index $p$ as $1$, hence, $\hat{W}_{1j}$ is of the form $\hat{W}_{1j}=\mathbf{b}\mathbf{a}_j^T$ with $\mathbf{b}\in \mathbb{R}^{|\mathcal{C}|}=[1,0,...,0]^T$ and $\mathbf{a}_j$ is a real vector of appropriate dimensions, $j\in\{2,...,u+2\}$. Hence, the rank condition \eqref{eqn:rank_condition_sc} holds.

%In the upcoming dicussion, we show that the LTP-based condition is itself a special case of eqn. \eqref{eqn:rank_condition_sc}.

\end{remark}

The following example illustrates the conditions in Theorem \ref{thm:WC}.
\begin{expm}
Consider the network $\mathcal{G}$ shown in Fig. \ref{fig:Network_g} with stubborn agents $1$ and $4$. The agents $7,8$ and $9$ form an oblivious iSCC in $\mathcal{G}$. Let agents from $\mathcal{G}$ be selected such that $\mathcal{C}=\{2,6\}$. For this choice of $\mathcal{C}$, we renumber the nodes in $\mathcal{G}$ such that: (1) first, the agents in $\mathcal{C}$ are indexed $\hat{1}=2,\hat{2}=6$ (ii) followed by agents who are neither in $\mathcal{C}$ nor in oblivious iSCCs, $\hat{3}=1,\hat{4}=3,\hat{5}=4,\hat{6}=5$ and (iii) followed by the agents in the oblivious iSCC $\hat{7}=7,\hat{8}=8,\hat{9}=9$.   
Based on this renumbering, we partition $W$ and obtain
\begingroup
\setlength{\abovedisplayskip}{2pt}
\setlength{\belowdisplayskip}{2pt}
\begin{align*}
\hat{W}_{12}=\begin{bmatrix}
    0.2 & 0 & 0 & 0.3\\
    0.1 & 0 & 0 & 0.15
\end{bmatrix}, \qquad \hat{W}_{13}\mathbb{1}=\begin{bmatrix}
    0.4 \\0.2
\end{bmatrix}    
\end{align*}
\endgroup
Thus, $\begin{bmatrix}
   \hat{W}_{12} & W_{13}\mathbb{1}
\end{bmatrix}$ forms a rank-1 matrix. Similarly, for $\mathcal{C}=\{3,5\}$, we again re-number the agents and obtain 
\begingroup
\setlength{\abovedisplayskip}{2pt}
\setlength{\belowdisplayskip}{2pt}
\begin{align*}
 \hat{W}_{12}=\begin{bmatrix}
    0 & 0.5 & 0.5 & 0\\
    0 & 0.5 & 0.5 & 0
\end{bmatrix}, \qquad  \hat{W}_{13}\mathbb{1}=\begin{bmatrix}
    0 \\
    0
\end{bmatrix}  
\end{align*}
\endgroup
Here as well, $\operatorname{rank}(\begin{bmatrix}
   \hat{W}_{12} & \hat{W}_{13}\mathbb{1}
\end{bmatrix})=1$. By Theorem \ref{thm:WC}, each set $\{2,6\}$ and $\{3,5\}$ forms an  opinion cluster, as shown in Fig. \ref{fig:plot_opinion_clusters}.

%$\mathcal{G}$ can renumbered such that: the nodes in $\mathcal{C}$such that $\hat{W}_{12}=[]$, $\hat{W}_{13}\mathbb{1}=$. 

\begin{comment}
   the sub-matrices $W[\{3,8\},\{1,2,4,5,6,7\}]$ and $W[\{4,7\},\{1,2,3,5,6,8\}]$ have rank equal to $1$.
\begin{align*}
 &W[\{3,8\},\{1,2,4,5,6,7\}]=\begin{bmatrix}
    0.15 & 0.1 & 0 & 0 & 0& 0.25\\
    0.3 & 0.2 & 0 & 0 & 0 & 0.5
\end{bmatrix} \\
&W[\{4,7\},\{1,2,3,5,6,8\}]=\begin{bmatrix}
    0 & 0 & 0.5 & 0 &0.5 & 0 \\
    0 & 0 & 0.5 & 0 & 0.5 & 0 \\
\end{bmatrix}
\end{align*}
Hence, for node sets $\{3,8\}$ and $\{4,6\}$, condition C1.) holds. Therefore, under the FJ model, the agents $3,8$ form one opinion cluster and the agenta $4,7$ form another opinion cluster, as shown in Fig. \ref{fig:plot_opinion_clusters}. Further, node $5$ is an LTP agent with $\mathcal{N}_5=\{6\}$, hence, $5,6$ form the third opinion cluster.      
\end{comment}

\end{expm}

\begin{figure}[h]
    \centering
    \begin{subfigure}{0.23\textwidth}
        \centering
        \includegraphics[width=01\linewidth]{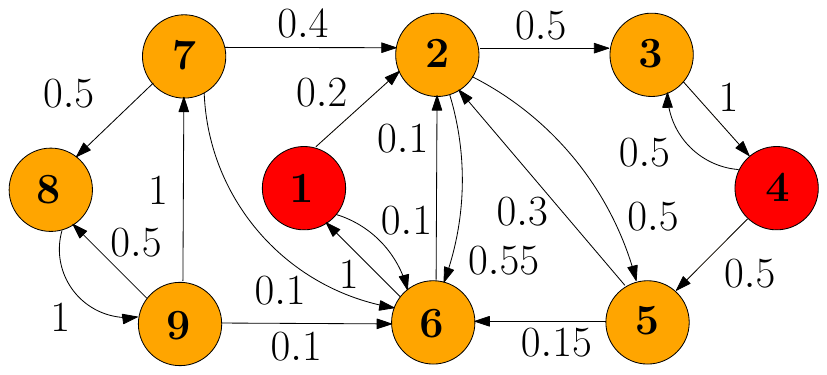}
    \caption{Network $\mathcal{G}$}
    \label{fig:Network_g}
    \end{subfigure}
    \begin{subfigure}{0.23\textwidth}
        \centering
        \includegraphics[width=1\linewidth]{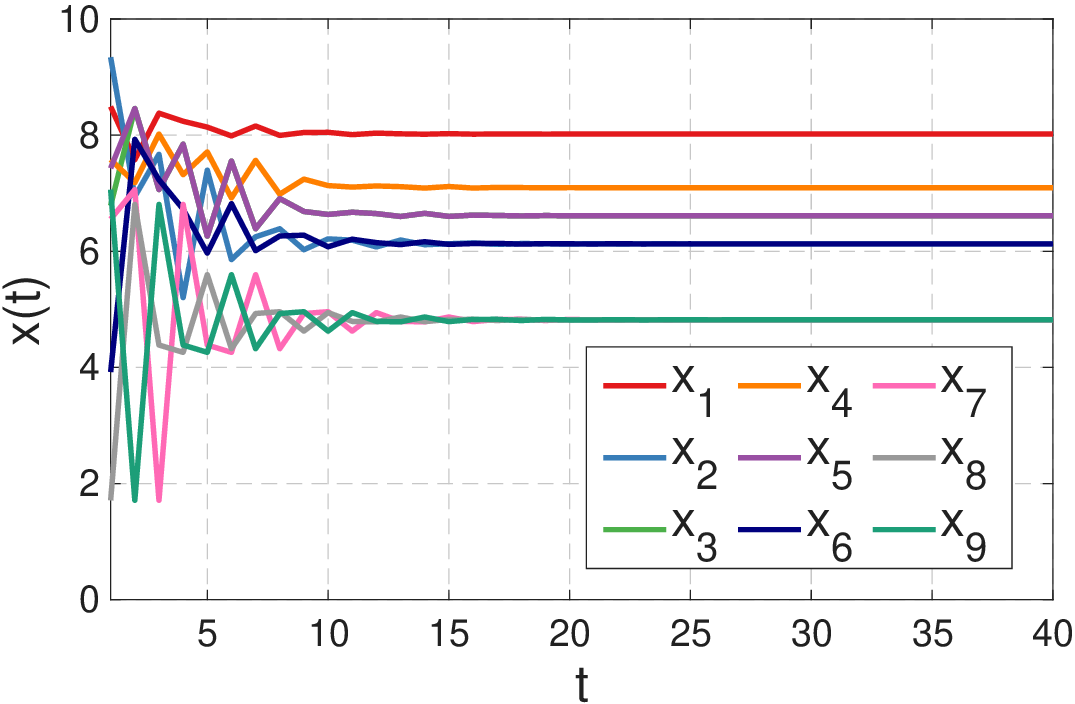}
    \caption{Formation of opinions clusters}
    \label{fig:plot_opinion_clusters}
    \end{subfigure}
    \caption{This figure illustrates the formation of opinion clusters under the FJ model due to the topological properties of $\mathcal{G}$.}
\end{figure}

Theorem \ref{thm:WC} significantly generalises the LTP-based conditions presented in \cite{shrinate2025opinionclusteringfriedkinjohnsenmodel} (Lemma \ref{lemma:ACC}) 
by showing that the LTP-based conditions are both necessary and sufficient when the agents forming the opinion cluster have a stubborn agent. Moreover, for a set of agents without any stubborn agents, Theorem \ref{thm:WC} presents a more generalised necessary and sufficient condition in eqn. \eqref{eqn:rank_condition} with the LTP-based condition as its special case. 

{Using the results established in Theorem \ref{thm:WC}, we present a partition of  the agents in a network $\mathcal{G}$ into multiple subgroups such that each subgroup forms an opinion cluster and overall the  
network achieves multiconsensus under the FJ model.} 

\begin{cor}
\label{cor:multi_consensus}
Consider a network $\mathcal{G}=(\mathcal{V},\mathcal{E})$ with $m\geq 1$ stubborn agents and $u\geq 0$ oblivious agents such that Assumption \ref{assump:ns} holds. Let the opinions of agents be governed by the FJ model \eqref{eq:FJ_opinion_dynamics}. Let $\mathcal{O}$ be a set containing the agents belonging to each of the oblivious iSCCs. 
The network \eqref{eq:FJ_opinion_dynamics} achieves multiconsensus with atmost $u+k$ opinion clusters if the agents $\mathcal{V}\setminus\mathcal{O}$ can be partitioned into $k$ disjoint subgroups such that each subgroup without any stubborn agents satisfies condition C1.) and each subgroup with a stubborn agent satisfies condition C2).    
\end{cor}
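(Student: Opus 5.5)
The corollary follows from Theorem \ref{thm:WC} together with the known behaviour of oblivious iSCCs; no new analysis of the dynamics is needed. The plan is to exhibit a partition of all of $\mathcal{V}$ into at most $u+k$ subgroups, each of which is an opinion cluster in the sense of eqn. \eqref{eqn:opinion_cluster} for every initial state. By the definition of multiconsensus, this is exactly what has to be verified.

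First, the oblivious agents. The set $\mathcal{O}$ is the union of the $u$ oblivious iSCCs, which we call $\mathcal{O}_1,\dots,\mathcal{O}_u$. By eqn. \eqref{eqn:2_cluster_ss}, the opinions in each $\mathcal{O}_i$ evolve under the autonomous DeGroot dynamics $\hat{\mathbf{x}}_{i+2}(k+1)=\hat{W}_{(i+2)(i+2)}\hat{\mathbf{x}}_{i+2}(k)$. The matrix $\hat{W}_{(i+2)(i+2)}$ is row-stochastic and irreducible, because $\mathcal{O}_i$ is strongly connected with no incoming edges from outside. It is also primitive: if $|\mathcal{O}_i|\ge 2$ this is Assumption \ref{assump:ns}, and a singleton iSCC is trivially convergent. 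Hence $W_{22}^*$ restricted to $\mathcal{O}_i$ equals $\mathbb{1}\pi_i^T$. By Lemma \ref{lm:final_op}, $\mathbf{x}_2^*=W_{22}^*\mathbf{x}_2(0)$, so every agent of $\mathcal{O}_i$ has final opinion $\pi_i^T\hat{\mathbf{x}}_{i+2}(0)$. Each $\mathcal{O}_i$ is therefore an opinion cluster, which gives $u$ clusters (consistent with \cite{bullo}).

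Second, the non-oblivious part. Let $\mathcal{V}_1,\dots,\mathcal{V}_k$ be the assumed partition of $\mathcal{V}\setminus\mathcal{O}$. Fix $h$ and apply Theorem \ref{thm:WC} with $\mathcal{C}=\mathcal{V}_h$, re-indexing the agents as described before the theorem. If $\mathcal{V}_h$ contains no stubborn agent, the hypothesis says C1.) holds. If it contains a stubborn agent, the hypothesis says C2.) holds, which is the LTP case of Lemma \ref{lemma:ACC}. In both cases the "if" direction of Theorem \ref{thm:WC} yields $\lim_{k\to\infty}x_i(k)-x_j(k)=0$ for all $i,j\in\mathcal{V}_h$ and every $\mathbf{x}(0)$. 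The hypotheses of the theorem are satisfied: $m\ge 1$, Assumption \ref{assump:ns} holds, and the dynamics are \eqref{eq:FJ_opinion_dynamics}.

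Finally, $\{\mathcal{O}_1,\dots,\mathcal{O}_u,\mathcal{V}_1,\dots,\mathcal{V}_k\}$ is a partition of $\mathcal{V}$ into $u+k$ disjoint sets, each an opinion cluster, so the network achieves multiconsensus. The bound is "at most" $u+k$ because two distinct subgroups may converge to the same value, and the definition of multiconsensus permits this. The resulting number of distinct limit values is therefore at most $u+k$.

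The only point needing care is applying Theorem \ref{thm:WC} separately to each $\mathcal{V}_h$. For each $h$ the re-indexing changes which blocks form $\hat{W}_{12}$, $Z$ and $\psi$. Condition C1.) must be read as stated for that particular $\mathcal{C}=\mathcal{V}_h$ with the other subgroups absorbed into the "neither in $\mathcal{C}$ nor oblivious" block. This is exactly the hypothesis, so no compatibility between the different $\mathcal{V}_h$ is required.
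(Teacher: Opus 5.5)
Your proposal is correct and takes the same approach as the paper, which omits the proof as following directly from Theorem~\ref{thm:WC}. Like the paper, you treat each oblivious iSCC as a DeGroot consensus cluster and apply the sufficiency direction of Theorem~\ref{thm:WC} (C1, or C2 via Lemma~\ref{lemma:ACC}) separately to each of the $k$ subgroups of $\mathcal{V}\setminus\mathcal{O}$; your primitivity argument for the iSCC blocks and your note on re-indexing for each $\mathcal{C}=\mathcal{V}_h$ simply make explicit what the paper leaves implicit.
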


 %\end{cor}

The proof follows directly from Theorem \ref{thm:WC}, hence, it is omitted. Corollary \ref{cor:multi_consensus} presents conditions under which the agents in a weakly connected digraph achieve multiconsensus under the FJ model. Under the given conditions, the agents belonging to each of the oblivious iSCCs form $u$ disjoint subgroups and the remaining agents $\mathcal{V}\setminus \mathcal{O}$ form another $k$ disjoint subgroups. Thus, a digraph satisfying the conditions presented in Corollary \ref{cor:multi_consensus} achieves multiconsensus with atmost $k+u$ opinion clusters. %Each of these subgroups contains the oblivious agent(s) in an iSCC and the oblivious agent(s) that has a path(s) only from this iSCC. Further, the remaining agents $\mathcal{V}\setminus \mathcal{O}$ form another $k$ disjoint subgraph.  
\vspace{-10pt}
\subsection{Conditions for Cluster Consensus}
%\end{comment}

%The proof of this result is presented in the Appendix.

 %On the contrary, Theorem \ref{thm:SN_MC} shows that 

% In the following result, we present such a partition of the network where the agents in two or more different subgroups do not form an opinion cluster.
Under multiconsensus, the agents belonging to two distinct subgroups of the network can converge to the same final opinion. However, under cluster consensus, the agents in different subgroups must always converge to distinct final opinions.     {Therefore, the emergent behaviour of agents is more well-defined under cluster consensus than multiconsensus. Moreover, cluster consensus is also better suited to represent the echo chambers, where there is persistent disagreement among individuals with distinct ideologies.} In this subsection, we derive the conditions under which the network achieves cluster consensus under the FJ model.

%In this section, we present a partition of the network to achieve   thereby generalising the conditions for cluster consensus in \cite{yao2022cluster} to digraphs with cycles formed by multiple stubborn agents.
%\begin{remark}
%Under multiconsensus, the agents in a network belonging to two different subgroups can merge to form an opinion cluster. On the contrary,
Consider a set $\mathcal{C}$ such that $\mathcal{C}=\{p\}\cup \mathcal{N}_p$, where $p$ is a stubborn LTP agent.
By condition C2.) in Theorem \ref{thm:WC}, any superset of $\mathcal{C}$ cannot form an opinion cluster. Now, consider a network $\mathcal{G}$ with $m$ stubborn agents that is partitioned into disjoint subgroups $\mathcal{V}_1,...,\mathcal{V}_m$ such that each subgroup $\mathcal{V}_i$ contains a stubborn agent LTP agent $s_i$ and those persuaded by it $\mathcal{N}_{s_i}$ for $i\in[m]$. By Theorem \ref{thm:WC}, the agents belonging to any two distinct subgroups $\mathcal{V}_i$ and $\mathcal{V}_j$ cannot merge to form an opinion cluster (have consensus for all initial states).
However, Theorem \ref{thm:WC} does not rule out that for some special initial states, the final opinions of agents in $\mathcal{V}_i$ and $\mathcal{V}_j$ can be equal. The following result shows that these initial opinions form a measure-zero set.
%\end{remark}

\begin{comment}
We demonstrate this using the following example:
\begin{expm}
 Consider the network $\mathcal{G}$ with stubborn agents $2,4$ and $6$ shown in Fig. \ref{}. Each of the stubborn agents are LTP agents such that $\mathcal{N}_2=\{3\}, \mathcal{N}_4=\{4\}$ and $\mathcal{N}_6=\{1\}$. Hence, the network can be partitioned such that each subgroup contains an LTP agent and the nodes persuaded by it. 
 Under the FJ model, the following initial conditions lead to two subgroups forming the same final opinion
 \begin{enumerate}
     \item $\mathbf{x}_0=[] $
 \end{enumerate}
\end{expm}
    
\end{comment}

\begin{thm}
\label{thm:CC}
    Consider a network $\mathcal{G}$ with $m\geq 1$ stubborn agents such that the Assumption \ref{assump:ns} holds. The opinions of agents in $\mathcal{G}$ are governed by the FJ model \eqref{eq:FJ_opinion_dynamics}. The system \eqref{eq:FJ_opinion_dynamics} achieves cluster consensus with $m$ opinion clusters for \textit{almost all} initial opinions if and only if $\mathcal{G}$ can be partitioned into $m$ subgroups $\mathcal{V}_1,...,\mathcal{V}_m$ such that each subgroup $\mathcal{V}_i$ is composed of a stubborn LTP agent $s_i$ and those persuaded by it $\mathcal{N}_{s_i}$ for $i\in[m]$.
\end{thm}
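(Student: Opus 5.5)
We prove the two directions separately. Both rest on Theorem \ref{thm:WC} and on the linear form of the final opinions given in Lemma \ref{lm:final_op}.

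\textbf{Sufficiency.} Suppose $\mathcal{V}$ is partitioned into $\mathcal{V}_1,\dots,\mathcal{V}_m$ with $\mathcal{V}_i=\{s_i\}\cup\mathcal{N}_{s_i}$, where each $s_i$ is a stubborn LTP agent.

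First, every agent is then either stubborn or persuaded by a stubborn agent. So there are no oblivious agents, the influential set is $\mathcal{I}=\{s_1,\dots,s_m\}$, and the first line of eqn. \eqref{eqn:final_op_with_oblivious} reduces to $\mathbf{x}^*=P\mathbf{x}(0)$ with
\[
P=(I-(I-\beta)W)^{-1}\beta .
\]
Here $P$ is row-stochastic and has nonzero columns only at $s_1,\dots,s_m$.

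By Lemma \ref{lemma:ACC} (equivalently C2 of Theorem \ref{thm:WC}), each $\mathcal{V}_i$ is an opinion cluster. Hence the final opinion of every agent in $\mathcal{V}_i$ equals $x^*_{s_i}=\mathbf{r}_i^T\mathbf{y}$, where $\mathbf{y}=[x_{s_1}(0),\dots,x_{s_m}(0)]^T$ and $\mathbf{r}_i$ is the row of $P$ at $s_i$, restricted to the stubborn columns.

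Cluster consensus fails exactly when $(\mathbf{r}_i-\mathbf{r}_j)^T\mathbf{y}=0$ for some $i\neq j$. This set is a finite union of hyperplanes, and hence has measure zero, provided $\mathbf{r}_i\neq\mathbf{r}_j$ for all $i\neq j$.

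Suppose instead that $\mathbf{r}_i=\mathbf{r}_j$. Then $\mathcal{V}_i\cup\mathcal{V}_j$ would reach consensus for every initial state, so it would be an opinion cluster containing two stubborn agents. This contradicts Theorem \ref{thm:WC}, because C1 requires no stubborn agents and C2 requires a unique one. Therefore the rows $\mathbf{r}_i$ are pairwise distinct, and cluster consensus with $m$ clusters holds for almost all $\mathbf{x}(0)$.

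\textbf{Necessity.} Assume cluster consensus holds with $m$ clusters $\mathcal{V}_1,\dots,\mathcal{V}_m$ for almost all initial opinions.

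The first step is to upgrade the intra-cluster equalities to hold for all initial states. The map $\mathbf{x}(0)\mapsto x_a^*-x_b^*$ is linear, and it vanishes on a set of full measure. Its zero set is therefore not contained in a proper subspace, so the map vanishes identically. Hence each $\mathcal{V}_h$ is an opinion cluster in the sense of eqn. \eqref{eqn:opinion_cluster}, and Theorem \ref{thm:WC} applies to each of them.

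The second step is to show that there are no oblivious agents. Each oblivious iSCC forms its own opinion cluster, and its value $W_{22}^*\mathbf{x}_2(0)$ depends only on oblivious initial states. I would argue that such an iSCC cannot share a cluster with any stubborn agent $s$. Varying $x_s(0)$ alone changes the opinion of $s$, since $P_{ss}\ge\beta_s>0$, but it does not change the iSCC's value. That contradicts the agreement within the cluster, which must hold for all initial states.

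Consequently the clusters containing the $m$ stubborn agents, together with any oblivious clusters, would number more than $m$. The exception is the case with no oblivious agents and each cluster holding exactly one stubborn agent. To confirm this, note that each cluster holds at most one stubborn agent by Theorem \ref{thm:WC}. It remains to exclude a cluster with zero stubborn agents (a C1-type cluster). I expect this step to be the main obstacle.

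To handle it, I would count the distinct functions $\mathbf{r}^T\mathbf{y}$. A stubborn-free cluster $\mathcal{C}$ would have a final value equal to some convex combination $\mathbf{c}^T\mathbf{y}$. There are exactly $m$ clusters, and the $m$ stubborn agents are distributed with at most one per cluster. Hence, if any cluster is stubborn-free, then some other cluster must contain at least two stubborn agents, which is impossible. Therefore every cluster contains exactly one stubborn agent $s_i$. By C2, each cluster is then exactly $\{s_i\}\cup\mathcal{N}_{s_i}$ with $s_i$ an LTP agent, which gives the required partition.
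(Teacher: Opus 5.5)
Your proof is correct given Theorem~\ref{thm:WC}. It shares the paper's skeleton: Theorem~\ref{thm:WC} in both directions, and linearity for the measure-zero claim. The nondegeneracy in the sufficiency part, however, comes from a different source. The paper uses the fact that agents in $\mathcal{N}_{s_i}$ have no in-neighbours outside $\mathcal{V}_i$. This collapses the steady state to the $m$-dimensional system $\mathbf{y}=(I-A)\Omega\mathbf{y}+A\mathbf{x}_s(0)$ with $\Omega$ row-stochastic. The paper then maps the union of hyperplanes $\{y_i=y_j\}$ through the invertible matrix $A^{-1}(I-(I-A)\Omega)$. You instead use the clause of Theorem~\ref{thm:WC} forbidding two stubborn agents in one opinion cluster to force $\mathbf{r}_i\neq\mathbf{r}_j$, so each bad set is the kernel of a nonzero linear functional. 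Your version is shorter and needs neither the reduced system nor $\rho((I-A)\Omega)<1$. The paper's version gives the stronger fact that $\mathbf{x}_s(0)\mapsto\mathbf{y}$ is a bijection, so any prescribed cluster values are attainable. For necessity the paper writes only ``by Theorem~\ref{thm:WC}''. Your upgrade from almost all to all initial states, the exclusion of oblivious agents via $P_{ss}\geq\beta_s>0$, and the pigeonhole count are exactly what that sentence leaves implicit. The pigeonhole count by itself disposes of the step you flag as the main obstacle, so the remark about counting functions $\mathbf{r}^T\mathbf{y}$ is superfluous.

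One caveat applies equally to the paper. Any nonempty subset of an opinion cluster is again an opinion cluster, so the necessity half of C2 really yields only $\mathcal{V}_i\subseteq\{s_i\}\cup\mathcal{N}_{s_i}$. Equality still follows in your setting, because these sets are pairwise disjoint and the $\mathcal{V}_i$ cover $\mathcal{V}$. However, $s_i$ is an LTP agent only if $\mathcal{N}_{s_i}\neq\emptyset$. Consider a network of just two stubborn agents, each the other's only in-neighbour. It reaches cluster consensus with clusters $\{1\}$ and $\{2\}$ for almost all initial states, yet it has no LTP agent. Your last step, like the paper's, therefore tacitly assumes that every cluster contains a non-stubborn agent.
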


The proof of Theorem \ref{thm:CC} is given in the Appendix. Theorem \ref{thm:CC} presents a necessary and sufficient condition for cluster consensus in the FJ framework. 
This result also generalises the conditions for cluster consensus presented in \cite{yao2022cluster} to digraphs where a cycle can contain multiple stubborn agents. Moreover, since the final opinions of the oblivious and non-oblivious agents cannot be equal under the FJ model (Lemma 4 in \cite{yao2022cluster}). Theorem \ref{thm:CC} can be extended to include networks with oblivious agents.
%An LTP agent is defined based only on paths from the influential agents; consequently, the final opinions of the agents in set $\mathcal{N}_p \cup \{p\}$ are equal, independent of the edge weights of the network and the stubbornness values. Thus, a major advantage of Theorem \ref{thm:WC} is that it presents a purely graph topology-based sufficient condition that ensures a set of agents form an opinion cluster in the FJ framework. 

\begin{cor}
\label{cor:CC}
    Consider a network $\mathcal{G}$ with $m\geq 1$ stubborn agents and oblivious agents such that Assumption \ref{assump:ns} holds. The opinions of agents in $\mathcal{G}$ are governed by the FJ model \eqref{eq:FJ_opinion_dynamics}.
     The network \eqref{eq:FJ_opinion_dynamics} achieves cluster consensus with $m+1$ opinion clusters for \textit{almost all} initial states if and only if the following conditions hold:
      \begin{itemize}
         \item the non-oblivious agents in $\mathcal{G}$ can be partitioned into $m$ subgroups such that each subgroup is composed of a stubborn agent LTP agent $s_i$ and those persuaded by it $\mathcal{N}_{{s}_i}$ for $i\in[m]$,

\item the oblivious agents form a unique aperiodic iSCC.
     \end{itemize}
    
\end{cor}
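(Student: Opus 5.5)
We would prove Corollary \ref{cor:CC} by reducing to Theorem \ref{thm:CC} and analysing the oblivious block separately via the decomposition in Lemma \ref{lm:final_op}. From eqn. \eqref{eqn:final_op_with_oblivious}, $\mathbf{x}_2^*=W_{22}^*\mathbf{x}_2(0)$ depends only on oblivious initial states. $\mathbf{x}_1^*$ depends linearly on $\bar{\beta}\mathbf{x}_1(0)$ and on $\mathbf{x}_2(0)$ through $W_{12}W_{22}^*$.

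\emph{Sufficiency.} Suppose the oblivious agents form a unique aperiodic iSCC. Then $W_{22}$ is primitive row-stochastic on that block. Hence $W_{22}^*=\mathbb{1}\pi^T$, and all oblivious agents reach the common value $\pi^T\mathbf{x}_2(0)$, giving one opinion cluster. For the non-oblivious agents, the partition into sets $\{s_i\}\cup\mathcal{N}_{s_i}$ gives $m$ opinion clusters by condition C2.) of Theorem \ref{thm:WC}; note that the oblivious iSCC counts as an influential set in the LTP definition.

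Distinctness for almost all initial states follows by the argument of Theorem \ref{thm:CC}. For each pair of clusters, the difference of final opinions is a linear functional of $\mathbf{x}(0)$. We show it is not identically zero, so its zero set is a hyperplane and the finite union of such sets has measure zero. For two non-oblivious clusters containing $s_i$ and $s_j$, the coefficient of $x_{s_i}(0)$ in $x_{s_i}^*$ is strictly larger than in $x_{s_j}^*$. This holds because $\beta_{s_i}>0$ and $(I-(I-\bar\beta)W_{11})^{-1}$ has positive diagonal and entries below one off the source, exactly as in the proof of Theorem \ref{thm:CC}. For a non-oblivious cluster against the oblivious cluster, we invoke Lemma 4 of \cite{yao2022cluster}: a non-oblivious agent with $\beta_{s_i}>0$ carries a nonzero weight on $x_{s_i}(0)$, while oblivious final opinions carry zero weight on it. The functional is therefore nonzero.

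\emph{Necessity.} Suppose cluster consensus holds with $m+1$ clusters for almost all initial states. Each cluster must satisfy eqn. \eqref{eqn:opinion_cluster} for all initial states, since a linear functional vanishing almost everywhere vanishes identically. By Lemma 4 of \cite{yao2022cluster}, no cluster mixes oblivious and non-oblivious agents for generic states. By Theorem \ref{thm:WC}, each cluster contains at most one stubborn agent. If a non-oblivious cluster had no stubborn agent, we would show it generically coincides with a cluster containing a stubborn agent or the oblivious one, or else the count would exceed $m+1$. The cleaner route is counting: the oblivious agents form at least one cluster (at least $u$ if there are $u$ iSCCs with generically different consensus values). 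Two distinct oblivious iSCCs have independent initial data, so they converge to generically distinct values. Thus $u=1$, leaving exactly $m$ clusters for the non-oblivious agents.

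Convergence is guaranteed by Assumption \ref{assump:ns}. If that lone iSCC were periodic with at least two agents, the assumption fails; if it is a single agent, it is trivially aperiodic. With exactly $m$ non-oblivious clusters and $m$ stubborn agents, each containing at most one, each contains exactly one. Condition C2.) of Theorem \ref{thm:WC} then forces each cluster to equal $\{s_i\}\cup\mathcal{N}_{s_i}$.

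\emph{Main obstacle.} The delicate step is ruling out, in the necessity part, a stubborn-free non-oblivious cluster satisfying C1.). Our first attempt is to show that each such agent's final opinion is a convex combination with generically distinct values. Such a cluster would then force one of the $m$ stubborn-containing groups to be absent, contradicting the count. We would handle this exactly as in the proof of Theorem \ref{thm:CC}: every stubborn agent lies in some cluster, so the $m$ stubborn agents occupy $m$ distinct non-oblivious clusters. A further stubborn-free cluster would exceed $m$.
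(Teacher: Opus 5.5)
Your route is the one the paper intends. It omits the proof, saying it follows from Theorem~\ref{thm:CC} together with Lemma~4 of \cite{yao2022cluster}, and your necessity half is sound:
\begin{itemize}
\item a linear functional vanishing almost everywhere vanishes identically, so each cluster is an opinion cluster;
\item oblivious and non-oblivious agents cannot share a cluster;
\item counting then does the rest. There are at least $u\geq 1$ oblivious clusters, and at least $m$ clusters because the $m$ stubborn agents lie in distinct ones. With $m+1$ in total, this forces $u=1$, a single oblivious cluster, and exactly $m$ non-oblivious clusters, each with one stubborn agent, so C2.) applies.
\end{itemize}
This counting already excludes a stubborn-free C1.) cluster, so your ``main obstacle'' paragraph is unnecessary. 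Necessity only shows that the oblivious agents \emph{contain} a unique aperiodic iSCC. Oblivious agents downstream of it join its cluster, so nothing stronger holds. For the two directions to match, sufficiency should therefore use $W_{22}^*=\mathbb{1}\pi^T$ for a row-stochastic matrix with a unique aperiodic iSCC, not primitivity of all of $W_{22}$.

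The step that fails as written is separating two stubborn clusters in the sufficiency part. You need $M_{s_is_i}>M_{s_js_i}$ for $M=(I-(I-\bar\beta)W_{11})^{-1}$. This inequality is true, but ``positive diagonal and entries below one off the source'' does not imply it.
\begin{itemize}
\item A bound on entries never compares two entries of the same column.
\item Entries of $M$ need not be below one anyway: two stubborn agents in a $2$-cycle, each with stubbornness $\epsilon$, give off-diagonal entries $\approx 1/(2\epsilon)$.
\item This is also not what the proof of Theorem~\ref{thm:CC} does. That proof pulls the hyperplanes $\{y_i=y_j\}$ back through the invertible map $A^{-1}(I-(I-A)\Omega)$.
\end{itemize}

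The cleanest repair follows the paper. Since only $s_i$ has in-neighbours outside $\mathcal{V}_i$, the cluster values satisfy $\mathbf{y}=(I-A)(\Omega\mathbf{y}+\mathbf{w}\,y_o)+A\mathbf{x}_s(0)$. Here $y_o$ is the oblivious consensus value, $\mathbf{w}$ collects the weights from oblivious agents into each $s_i$, and $\Omega$ is now row-substochastic. Hence $N=(I-(I-A)\Omega)^{-1}$ exists and $NA$ is invertible. The differences $y_i-y_j$ and $y_i-y_o$ then have the nonzero coefficient rows $(\mathbb{e}_i-\mathbb{e}_j)^TNA$ and $\mathbb{e}_i^TNA$ on $\mathbf{x}_s(0)$, and your hyperplane argument goes through.

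Alternatively, keep your coefficient comparison and prove it by a column maximum principle. The vector $v=M\mathbb{e}_{s_i}$ satisfies $v=\mathbb{e}_{s_i}+(I-\bar\beta)W_{11}v$. A set where $v$ attains its maximum and which avoids $s_i$ would be stubborn-free and closed under in-neighbours, which is impossible for non-oblivious agents. So $\max v=v_{s_i}\geq 1$, and $v_{s_j}\le(1-\beta_{s_j})v_{s_i}<v_{s_i}$.
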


%topology-based conditions for cluster consensus in digraphs that do not satisfy the assumption that each 

\begin{comment}
\begin{thm}
\label{thm-SC}
    Consider a network $\mathcal{G}$ of $n$ agents with $m$ stubborn agents and no oblivious agents. The opinions of agents in $\mathcal{G}$ are governed by the FJ model \eqref{eq:FJ_opinion_dynamics}. If $p$ is an LTP agent, then
    opinions of $p$ and nodes in set $\mathcal{N}_p$ converge to the same final opinion and form an opinion cluster.
\end{thm}    
\end{comment}

%Under the FJ model, the opinions converge in the convex hull of the initial opinions of the initial agents. 

% 

%Using Theorem \ref{thm:WC}, the following result presents the topological conditions that result in multiconsensus. %with each opinion cluster formed by a desired set of agents.

%The proof of Corollary \ref{cor:cluster_consensus} is given in the Appendix.  The following example illustrates the conditions presented in Corollary \ref{cor:cluster_consensus} to obtain the desired opinion clustering.

% follows from Theorem \ref{thm:WC} and is omitted for brevity. (See the arXiv version for detailed proof and simulations \cite{}.)

% Next, we present the time complexity of identifying the LTP agents.

 % to assess the feasibility of identifying them. 
 The proof follows from Theorem \ref{thm:CC} and is omitted for brevity.
%An LTP agent is defined based only on paths from the influential agents. %; consequently, the final opinions of the agents in set $\mathcal{N}_p \cup \{p\}$ are equal, independent of the .
A key advantage of the LTP-based conditions proposed in Theorem \ref{thm:CC} is that it presents a purely graph topology-based and does not depend on the edge weights of the network and the stubbornness values.

  \begin{remark}
\textit{(Applicability to system beyond social networks) }Theorem \ref{thm:CC} highlights that the FJ model can be employed as a distributed control law for achieving cluster consensus in systems beyond social networks as well. To list a few, cluster consensus is important in multi-robot systems, power grids, oscillators, \textit{etc.} \textit{Why FJ?} Most of the existing protocols for cluster consensus \cite{xia2011clustering,luo2021cluster,QIN20132898}, along with topological conditions, also impose constraints on the edge weights. For instance, the pinning control-based techniques \cite{xia2011clustering,luo2021cluster,QIN20132898} consider the in-degree balance condition. Similarly, in \cite{maria}, each agent forming a cluster distributes a predefined positive weight among the cooperative interactions and a negative weight(s) among antagonistic interactions. In contrast: (a) the FJ model can ensure cluster consensus under purely topology-based conditions, independent of the edge-weights and stubbornness values. This is useful since the exact values of the edge weights (and stubbornness) may not be known in real-world networks. (b) As a result, this leads to a simpler network design process.  
\end{remark}

Next, we assess the  practical applicability of the proposed conditions for cluster consensus through a time complexity analysis. % of identifying the LTP agents. 

%The following remark 
%highlight 
\vspace{-10pt}
\subsection{Time complexity of identifying the LTP agents in a network}

%Identifying an agent $p$ is an LTP agent requires  
%In this subsection, we derive the time complexity of identifying all of the LTP agents in a network.
 Consider a network $\mathcal{G}$ with $m$ stubborn agents and $u$ oblivious iSCCs. To identify the LTP agents, we perform the following steps:
 \begin{itemize}[leftmargin=0pt, itemindent=1.5em]
     \item Consider an agent $p\in \mathcal{V}$ and remove all the outgoing edges of $p$, which takes $O(n)$ time.
     \item Next, the modified $\mathcal{G}$ is traversed $m+t$ times using the Depth First Search (DFS) algorithm, once from each of the $m$ stubborn agents and the $u$ oblivious iSCCs. Since each iteration of DFS has complexity $O(n+|\mathcal{E}|)$, this step requires $O((m+t)(n+|\mathcal{E}|))$ time.%, where $|\mathcal{E}|$ denotes the number of edges.
     \item If a node remains unvisited after all these traversals, then $p$ satisfies Condition (i) in Defn \ref{defn:LTP}. Let $\mathcal{U}_p$ be the set of unvisited nodes. By repeating this step for all $p\in \mathcal{V}$, we identify all agents that satisfy Condition (i) and store them as $\mathcal{H}$. This step has a time complexity of $O((u+m)n(n+|\mathcal{E}|))$ which reduces to $O(n(n+|\mathcal{E}|))$ since $u+m\ll n$. If $\mathcal{H}\subseteq \mathcal{I}$, identifying all LTP agents requires $O(n(n+|\mathcal{E}|))$ time.
     \item {Next, we verify condition (ii) in Defn. \ref{defn:LTP} for each non-influential agent in $\mathcal{H}$. For a non-influential agent $p$, we check if there is a $c\in \mathcal{H}\setminus\{p\}$ such that  $p\in\mathcal{U}_c$. This step requires $O(n^2)$ time. Repeating it for all $p\in \mathcal{H}$ yields a worst-case complexity of $O(n^3)$. Thus, the time complexity of identifying all LTP agents in $\mathcal{G}$ is $O(n^3)$.}
 \end{itemize}
 
To verify if $\mathcal{G}$ satsifies
the conditions in Theorem \ref{thm:CC} (or Corollary \ref{cor:CC}), we check if $\mathcal{H}$ is equal to the set of stubborn agents and $\cup_{p\in \mathcal{H}}\{p\} \cup \mathcal{U}_p=\mathcal{V}$ (or $\cup_{p\in \mathcal{H}}\{p\} \cup \mathcal{U}_p$ equals the non-oblivious agents in $\mathcal{G}$). Hence, the overall complexity of verifying whether a network achieves cluster consensus is $O(n(n+|\mathcal{E}|))$.

 \vspace{-10pt}
 \subsection{Discussion}
%Empirical evidence shows that although only a limited number of social media users are in echo chambers, their contribution to the discourse on social media is significant \cite{pratelli2024entropy}.  %Each discursive community consists of some influential agents and their followers. For instance: some verified and non-verified users on X.

A group of users with aligned views that forms an echo chamber on social media is also known as a discursive community \cite{pratelli2024entropy}. In \cite{mattei2022bow}, the authors construct a network from an X dataset using user interactions (retweets) on a specific topic. Thereafter, using a community detection algorithm, they identify the discursive communities therein. When the discussion topic is polarising (such as COVID-19, elections, \textit{etc.}), the authors show that each discursive community exhibits a bow-tie structure, shown in Fig. \ref{fig:Bow-tie}. 

A bow-tie structure can be decomposed into seven components: IN, SCC, OUT, INTENDRILS, TUBES, OUTTENDRILS, and OTHERS.
In a discursive community, the IN component, generally composed of verified X users  (such as journalists, celebrities, politicians, \textit{etc}), who post tweets and influence other users. The users in INTENDRILS only follow the tweets of those in IN but are themselves not followed by anyone. The users in SCC form the largest strongly connected component in the digraph, whereas those in TUBES form a weakly connected component. All of them follow the users in IN. %and are followed by those in OUT. The users in OUT follow those in SCC, \textcolor{red}{IN}, TUBES and OUTTENDRILS. 

%In a bow-tie structure, a network can be decomposed into seven components: IN, SCC, OUT, INTENDRILS, TUBES, OUTTENDRILS, and OTHERS. The IN component, generally composed of verified X users  (such as journalists, celebrities, politicians, \textit{etc.}), who post tweets and influence other users. The users in INTENDRILS only follow the tweets of those in IN but are themselves not followed by anyone. The users in SCC form the largest strongly connected component in the network and those in TUBES are weakly connected. All of them follow the users in IN. %and are followed by those in OUT.

In prominent bow-tie structures, the components OTHERS and OUTENDRILS consist of very few users and can be neglected. In the  resulting bow-tie structure, the IN component acts like a stubborn LTP agent (composed of ideologically aligned verified users) and the remaining components (\textit{i.e.} INTENDRILS, SCC, OUT, TUBES) are persuaded by it. Therefore, the empirical study in \cite{mattei2022bow} shows that the topological interconnections between an LTP and its persuaded agents emerge naturally in several real-world echo chambers. 

%the topological property that is shown to result in echo chamber formation under the FJ model in Theorem \ref{thm:CC} is also seen in the real world echo chambers.

%Therefore, the empirical study in \cite{mattei2022bow} shows that the topological interconnections among the users forming echo chambers in several real-world  networks naturally resemble the LTP and its persuaded agents' structure.

%`the topological relation among an emerges naturally in several real-world echo chambers. 

%This emperical study shows that with under the formation of echo chamberers, each subgroup consists of unique  stubborn LTP agent and the remaining are persuaded by it. 

%However, this bow-tie architecture was not prominent found in the discursive communities discussing non-polarising issues (such as EURO Cup 2020).

%Bow tie structures are found in several complex networks such as web networks, gene regulator networks \textit{etc.}.  

%consists of $7$ nodes,  shown in Fig. \ref{}, the IN node consists generally of the verified users that post content, which is further amplified by those users forming the strongly connected component (SCC) which are later followed by OUT nodes.

\begin{figure}[h]
    \centering
    \includegraphics[width=0.5\linewidth]{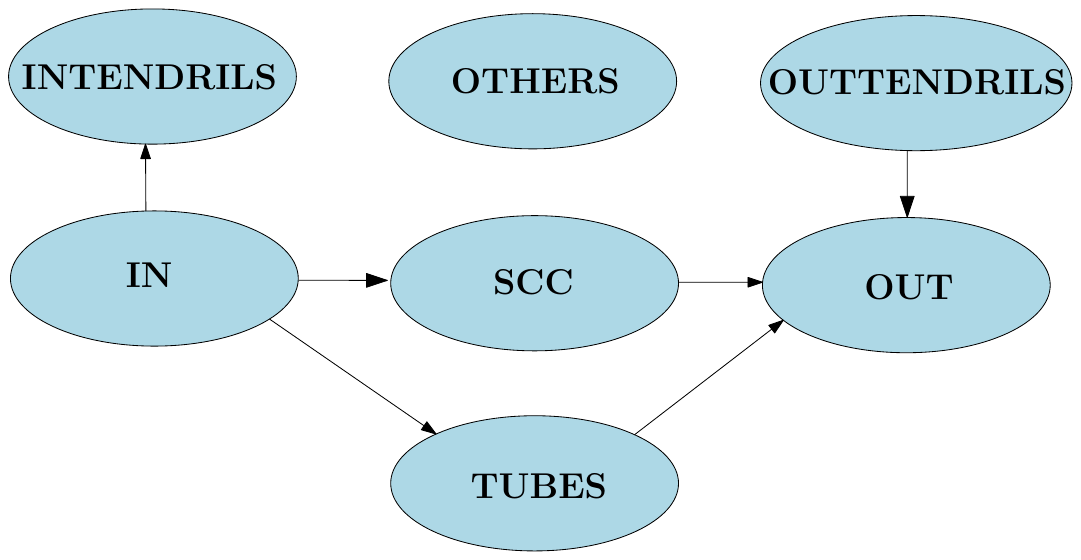}
    \caption{Bow-tie structure}
    \label{fig:Bow-tie}
\end{figure}
%\vspace{-10pt}
%For each data set, based on interactions (measured by retweets) the authors constructed an underlying interaction network of both verified and non-verified users and used a community detection protcol to segregrate them into possible ideological or political factions. Interestingly, the discursive community of each faction on polarising issues formed a bow-tie architecture, however this artictecture was not prominent in non-polarising issues.  
\vspace{-5pt}
\section{Conclusions} 
In this paper, we examine the cluster consensus problem in the FJ framework to explain the role of interactions in the emergence of echo chambers in social networks. First, we present the necessary and sufficient conditions under which any set of agents forms an opinion cluster under the FJ model. The proposed conditions depend on whether or not the set contains a stubborn agent. If a stubborn agent exists, then it must be an LTP agent and the rest in the set must be persuaded by it. For a set of non-stubborn agents, we obtain a more generalised \textit{if and only if} condition eqn. \ref{eqn:rank_condition} that leads to an opinion cluster with the LTP-based condition being its special case. Notably, under the generalised condition, the edge weights must satisfy eqn.  \eqref{eqn:rank_condition}, while the LTP-based condition does not depend on the edge weights and stubbornness values.

%unlike the LTP-based condition.%if and only if the stubborn agent is a 
%This work examines the conditions under which the agents that lead to role of network topology in the formation of opinion clusters in the FJ framework. 
%In this work, we examine the role of network interactions in the emergence of cluster consensus. Using these results, we explain with the objective of explaining the role of 
%formation of opinion clusters in the FJ frameworkwe examine the role of network interactions and present the necessary and sufficient conditions for a set of agents to form opinion clusters under the FJ model. Interestingly, if the set contains a stubborn agent, it forms an opinion cluster if and only if the stubborn agent is an LTP and the rest are persuaded by it. However, for a set without any stubborn agents, opinion clusters can form for some special values of edge weights, even when none of the agents is an LTP agent. 
%Here, we further explored the abilities of the LTP agents, introduced in \cite{shrinate2025opinionclusteringfriedkinjohnsenmodel}.  

Building on these results, we show that a network with $m$ stubborn agents achieves cluster consensus with $m$ opinion clusters \textit{if and only if} it can be partitioned into $m$ disjoint subgroups, each consisting of a stubborn LTP agent and those persuaded by it. The proposed condition for cluster consensus generalises \cite{yao2022cluster} to a broader class of digraphs where multiple stubborn agents can exist in a cycle. Moreover, it provides a theoretical explanation for the prominent bow-tie structure, that characteristically emerges in several real-world echo-chambers.
Since the proposed conditions are topology-based, networks satisfying these conditions are simple to design for applications in engineered networks. To this end, we also present a computationally efficient methodology with a time complexity of $O(n(n+|\mathcal{E}|))$ that verifies whether a network satisfies the proposed conditions. In future, it will be interesting to develop algorithms that suitably modify the network such that cluster consensus with any desired partitioning of agents is achieved. %Additionally, since the proposed conditions depend on network topology, the FJ model can be used for applications in engineered networks. Moreover, Finally, we present computationally efficient algorithm with a time complexity of $O(n(n+|\mathcal{E}|))$ to verify whether a network satisfies the proposed conditions. 

\label{Sec:conc}
\begin{comment}
by the
, there is a path from .

agent  because by the  Since $5$ is not an LTP and is not persuaded by another LTP agent, it forms a .

as demonstrated in Example \ref{Example-1},
However, if an  Similarly, topological modifications can result in a node

However,  The following example shows that the opinion clusters that form in the final opinion are independent of the degree of stubbornness of the existing stubborn agents. % in network $\math$
 \begin{expm}
    Consider the network $\mathcal{G}$ shown in Fig. \ref{Fig:expm_1_graph}.   Now, we modify the degree of stubbornness $\beta_2$ from $0.1$ to $0.99$ and plot the final opinions in Fig. \ref{Fig:final_op_and_stubbornnesn+2}. Independent of the magnitude of $\beta_2$, three clusters form in the final opinion with the same set of agents. A similar effect is observed by modifying $\beta_6$ as shown in Fig. \ref{Fig:final_op_and_stubbornness}.
\end{expm}
\begin{figure}[h]
    \centering
    \begin{subfigure}{0.23\textwidth}
        \centering
    \includegraphics[width=1\linewidth]{influence_3.pdf}
    \caption{Final opinions under varying $\beta_2$}
    \label{Fig:final_op_and_stubbornnesn+2}
    \end{subfigure}
    \hfill
        \begin{subfigure}{0.23\textwidth}
        \centering
    \includegraphics[width=1\linewidth]{influence_.pdf}
    \caption{Final opinions under varying $\beta_6$}
       \label{Fig:final_op_and_stubbornness}
    \end{subfigure}
    \caption{Change in final opinions under varying $\beta_2$ and $\beta_6$.}
    \label{fig:placeholder1}
\end{figure}   
\end{comment}
 %\vspace{-10pt}
 \vspace{-5pt}
\section*{Appendix}

\textbf{\textcolor{cyan}{Proof of Theorem-\ref{thm:WC}}}: 
At steady state, we can write the eqns. in \eqref{eqn:2_cluster_ss} as follows:
\begingroup
\setlength{\abovedisplayskip}{2pt}
\setlength{\belowdisplayskip}{2pt}
\begin{align}
   \label{eqn:R_matrix}
  & \tiny
  \underbrace{ \left[\begin{array}{c:c:ccc}
     I-(I-{\beta}_{11}) \hat{W}_{11} & -(I-\beta_{11}) \hat{W}_{12} &
     ... & -(I-\beta_{11}) \hat{W}_{1(u+2)} & -\phi \\ \hdashline
              -(1-\beta_{22}) \hat{W}_{21} & I-(I-\beta_{22})\hat{W}_{22}& 
              ... & -(1-\beta_{22}) \hat{W}_{2(u+2)} & -\psi \\ \hdashline
              \vdots & \vdots & \ddots & \vdots 
              & \vdots \\
              \mathbb{0} & \mathbb{0} & \mathbb{0}  &
              I-\hat{W}_{(u+2)(u+2)} & \mathbb{0} \\
                \mathbb{0} & \mathbb{0} & \mathbb{0} & \mathbb{0} & \mathbb{0}
  \end{array}
\right]}_{R} . \nonumber\\
& \scriptsize
 \qquad  \qquad  \qquad \qquad \qquad \qquad  \begin{bmatrix}
        \hat{\mathbf{x}}^*_1 &\hat{\mathbf{x}}^*_2 & \cdots & \hat{\mathbf{x}}_{u+2}^* & \mathbf{x}_s(0)
    \end{bmatrix}^T =\mathbb{0}.  
\end{align}
\endgroup    

 %\textcolor{magenta}{Reduce the font size of eqn. 7 slightly, so that even the equation number fits. We will see how it looks.} 
Here, $\hat{\mathbf{x}}_i^*:=\lim_{k \to \infty} \hat{\mathbf{x}}_i(k)$ for $i\in\{1,...,u+2\}$, $\mathbf{x}_s=[x_{s_1}(0),...,x_{s_m}(0)]$ contains the initial opinions of the $m$ stubborn agents. %(, labelled $s_1,...,s_m$. 
{The matrix $\phi=[\phi_{ij}]$ is defined as: $\phi_{ij}=\beta_i$  if stubborn agent $i\in \{1,...,|\mathcal{C}|\}$ is labelled $s_j$; otherwise $\phi_{ij}=0$.} %Similarly, matrix $\psi=[\psi_{ij}]$ is defined as: $\psi_{ij}=\beta_i$ if stubborn agent $i\notin\mathcal{C}$ is labelled $s_j$, otherwise  $\psi_{ij}=0$.}

By construction, the matrix $R$, defined in eqn. \eqref{eqn:R_matrix}, is a Laplacian matrix. The associated digraph $\mathcal{G}(R)$ contains $n+m$ nodes and has an edge $(i,j)$ if $[R]_{ji}=r_{ji}\neq 0$. We index the nodes in $\mathcal{G}(R)$ such that a node $i \in \{1,...,n\}$ is associated with the final opinion of the agent $i$ (consistent with the indexing in $\mathcal{G}$) and a node $n+i,~i\in [m]$ is associated with the initial opinion of the stubborn agent $s_i$. Each stubborn agent contributes two nodes in $\mathcal{G}(R)$; one associated with its initial opinion and another associated with its final opinion. The nodes $n+1,...,n+m$ form sources in $\mathcal{G}(R)$. Note that each source $n+i$ has a single outgoing edge only to the node associated with the final opinion of the corresponding stubborn agent. 

{Since $R$ is a Laplacian matrix, we can employ Kron Reduction to reduce the associated digraph $\mathcal{G}(R)$. In a digraph $\mathcal{G}$, each non-stubborn agent, which is not in any oblivious iSCCs, has atleast one path from a stubborn agent or an agent in an oblivious iSCC. Consequently, each node in $\mathcal{G}(R)$  indexed $\{1,...,n\}$, that is not associated with an agent in any oblivious iSCC, has a path in $\mathcal{G}(R)$ from a node in $\{n+1,...,n+m\}$ or a node associated with an oblivious iSCC. %Since $\alpha$ contains the nodes $\{n+1,...,n+m\}$ and the agents in all the oblivious iSCCs,
Hence, for a set $\alpha \subset [n+m]$ that contains the nodes $\{n+1,...,n+m\}$ and all the nodes associated with the oblivious iSCCs, the Kron-Reduced matrix $R/\alpha^c$ is well defined (\textit{i.e.} $R[\alpha^c]=I-(I-\beta_{22})W_{22}$ is invertible)} (see Lemma 3.3 in \cite{Kron_red_digraphs}).
Next, using the Kron reduction of $\mathcal{G}(R)$, we present the proof of condition C1.)

%(\textcolor{magenta}{Why do we need to add it here?} Refer to \cite{shrinate2025opinionclusteringfriedkinjohnsenmodel} for detailed construction of $\mathcal{G}(R)$).

\textbf{Proof of C1:}
Just to recall, $\mathcal{C}$ consists only of non-stubborn agents who also do not belong to any oblivious iSCC. %We then construct a set
Consider a set $\alpha$ comprising all the nodes in $\mathcal{C}$, the agents in each of the $u$ oblivious iSCCs and the nodes $\{n+1,...,n+m\}$. %In any weakly connected digraph $\mathcal{G}$, each non-stubborn agent, not in any oblivious iSCC, has atleast one path either from a stubborn agent or an agent in an oblivious iSCC. Therefore, by construction of $\mathcal{G}(R)$, each node associated with a non-stubborn agent has a path in $\mathcal{G}(R)$ from a node in $\{n+1,...,n+m\}$ or a node associated with an oblivious iSCC. %Since $\alpha$ contains the nodes $\{n+1,...,n+m\}$ and the agents in all the oblivious iSCCs,Hence, each node in $\alpha^c$ has a path in $\mathcal{G}(R)$ from a node in $\alpha^c$. 
Since $\alpha$ contains the nodes $\{n+1,...,n+m\}$ and the agents in all the oblivious iSCCs, $R/\alpha^c$ is well defined. By the definition of $\alpha$, $R/\alpha^c$ is the Schur complement of  $I-(I-\beta_{22})W_{22}$ in $R$ (as highlighted in \eqref{eqn:R_matrix}) and can be evaluated using eqn. \eqref{eqn:Schur_complement}. Consequently, $W_{22}$ has order $|\alpha^c| \times |\alpha^c|$. %\textcolor{magenta}{cite the other paper on Kron-reduction for digraphs and explain the idea briefly}. %By the definition of $\alpha$, $R/\alpha^c$ is the Schur complement of  $I-(I-\beta_{22})W_{22}$ in $R$ and can be evaluated using eqn. \eqref{eqn:Schur_complement}. (Consequently, $W_{22}$ has order $|\alpha^c| \times |\alpha^c|$.)

Next, we partition the obtained $R/\alpha^c$ into submatrices as follows:
\begingroup
\setlength{\abovedisplayskip}{2pt}
\setlength{\belowdisplayskip}{2pt}
\begin{align*}
\scriptsize
R/\alpha^c=\begin{bmatrix}
   ^{\alpha}R_{11} & ^{\alpha}R_{12} & ... & ^{\alpha}R_{1(u+1)} & ^{\alpha}R_{1(u+2)} \\
   \mathbb{0} & ^{\alpha}R_{22} & ... & \mathbb{0} & \mathbb{0} \\
   \vdots & \cdots & \ddots & \cdots & \vdots \\
  \mathbb{0} & \mathbb{0} & \mathbb{0} & ^{\alpha}R_{(u+1)(u+1)} & \mathbb{0} \\
  \mathbb{0} & \mathbb{0} & \mathbb{0} & \mathbb{0} & \mathbb{0}
\end{bmatrix}
\end{align*}
\endgroup
The submatrices $^{\alpha}R_{ij}$ are defined as: 
\begin{enumerate}
    \item $^{\alpha}R_{11}=I-\hat{W}_{11}-Z(I-\beta_{22})\hat{W}_{21}$,
    \item  for $j\in\{2,...,u+1\}$, $^{\alpha}R_{1j}=-\hat{W}_{1(j+1)}-Z(I-\beta_{22})\hat{W}_{2(j+1)}$ and $^{\alpha}R_{jj}=I-\hat{W}_{(j+1)(j+1)}$,
    \item $^{\alpha}R_{1(u+2)}=-Z\psi$
\end{enumerate}
Since $R$ is a loopless Laplacian matrix, $R/\alpha^c$ is also a loopless Laplacian matrix by Statement 2) in Lemma \ref{lm:Basic_properties}.
%First, we show that $^{\alpha}R_{11}$ is an invertible matrix. 
Further, we know that each node in  $\mathcal{C}$ either has a path in $\mathcal{G}(R)$ from a node in $\{n+1,...,n+m\}$ or a node associated with an agent in an oblivious iSCC. By Statement 3) in Lemma \ref{lm:Basic_properties}, it follows that in the reduced digraph $\mathcal{G}_{\alpha}$ (associated with $R/\alpha^c$) as well, each node associated with $\mathcal{C}$ has a path from either nodes $\{n+1,...,n+m\}$ or the oblivious agents in the iSCCs. %As a result, for there is  the diagonal enteries of $^{\alpha}R_{11}$ is strictly greater that the sum of absolute values
{By Lemma 3.3 in \cite{Kron_red_digraphs}, the matrix $^{\alpha
}R_{11}$ is invertible.}

%by Statement 1 in Lemma \ref{lm:Basic_properties}, the Kron Reduction of $(R/\alpha^c)/\mathcal{C}$ is well defined. Thus, it follows that $^{\alpha}R_{11}$ is invertible.

{As discussed in Sec. \ref{subsec:KR}, evaluating the Schur complement $R/\alpha^c$ results in eqn. \eqref{eqn:R_matrix} being reduced to,}  
\begingroup
\setlength{\abovedisplayskip}{2pt}
\setlength{\belowdisplayskip}{2pt}
\begin{align}
\label{eqn:reduced_LEs}
    R/\alpha^c\cdot \begin{bmatrix}
        \hat{\mathbf{x}}_1^* & \hat{\mathbf{x}}_3^* & ...& \hat{\mathbf{x}}_{u+2}^*
    \end{bmatrix}= \mathbb{0} 
\end{align}
\endgroup
From eqn. \eqref{eqn:reduced_LEs}, we get $^{\alpha}R_{11}\hat{\mathbf{x}}_1^*+(^{\alpha}R_{12}\hat{\mathbf{x}}_3^*+...+^{\alpha}R_{1(u+1)}\hat{\mathbf{x}}_{u+2}^* $ $+^{\alpha}R_{1(u+2)}\mathbf{x}_s(0)=\mathbb{0}$. Since the oblivious agents in each of the iSCCs achieve consensus, $\mathbf{\hat{x}}_i^*=\mathbb{1} b_{i}$, where $b_{i}\in \mathbb{R}$ depends on the initial opinions and connectivity among the agents in the corresponding iSCC and $i\in\{3,...,u+2\}$. Thus,
\begingroup
\setlength{\abovedisplayskip}{2pt}
\setlength{\belowdisplayskip}{2pt}
\begin{align}
\label{eqn:impact_on_final_op}
 \hat{\mathbf{x}}_1^*=-(^{\alpha}R_{11})^{-1}\big(^{\alpha}R_{12}\mathbb{1}b_3+...+&^{\alpha}R_{1(u+1)}\mathbb{1}b_{u+2} +\nonumber\\
 &^{\alpha}R_{1(u+2)}\mathbf{x}_s(0)\big)   
\end{align}
\endgroup

\begin{comment}
\begin{align*}
\hat{\mathbf{x}}_1^*=-(^{\alpha}R_{11})^{-1}\big(
\begin{bmatrix}
 ^{\alpha}R_{12}\mathbb{1} & ... & ^{\alpha}R_{1(u+1)}\mathbb{1} & ^{\alpha}R_{1(u+2)} 
\end{bmatrix} \begin{bmatrix}
  b_3 \\ \vdots \\b_{u+2} \\ \mathbf{x}_s(0)
\end{bmatrix}
\big)    
\end{align*}    
\end{comment}
(\textbf{Neccesity}) The agents in $\mathcal{C}$ have consensus (satisfy eqn. \eqref{eqn:opinion_cluster}) for all initial opinions of agents if and only 
\begingroup
\setlength{\abovedisplayskip}{2pt}
\setlength{\belowdisplayskip}{2pt}
\begin{align*}
   (^{\alpha}R_{11})^{-1}\big(
\begin{bmatrix}
 ^{\alpha}R_{12}\mathbb{1} & ... & ^{\alpha}R_{1(u+1)}\mathbb{1} & ^{\alpha}R_{1(u+2)} 
\end{bmatrix}= \mathbb{1} \mathbf{f}^T 
\end{align*}
\endgroup
where $\mathbf{f}\in \mathbb{R}^{m+u}$.
Since $(^{\alpha}R_{11})^{-1}$ is invertible,
$\operatorname{rank}([
 ^{\alpha}R_{12}\mathbb{1} \ \ ... \ $ $ ^{\alpha}R_{1(u+1)}\mathbb{1}  \  \    ^{\alpha}R_{1(u+2)} 
])$ $=\operatorname{rank}(\mathbb{1} \mathbf{f}^T)=1$ (see 0.4.6(b) in \cite{horn2012matrix}). Thus, $\operatorname{rank}([
 ^{\alpha}R_{12}\mathbb{1} \ \ ... \ \  ^{\alpha}R_{1(u+1)}\mathbb{1}  \ \ ^{\alpha}R_{1(u+2)}])=1$ is the necessary condition for $\mathcal{C}$ to form an opinion cluster. Expressing the matrices $^\alpha R_{1j}$ in terms of  $\hat{W}_{ij},Z$ and $\psi$, yields eqn. \eqref{eqn:rank_condition}.

(\textbf{Sufficiency}) Eqn. \eqref{eqn:rank_condition} implies that rank of $[
 ^{\alpha}R_{12}\mathbb{1} \ \  ...  \ \ ^{\alpha}R_{1(u+1)}\mathbb{1} $ $ \ \ ^{\alpha}R_{1(u+2)}]$ is $1$ and it can be
expressed as $\mathbf{y}\Delta^T$ for some $\mathbf{y} \in \mathbb{R}^{|\mathcal{C}|}$ and $\Delta=[\delta_i] \in \mathbb{R}^{u+m}$. Thus, we can write $^{\alpha}R_{1j}\mathbb{1}=$ $\mathbf{y}\delta_{j-1}$ for $j\in\{2,...,u+1\}$ and $^{\alpha}R_{1(u+2)}=\mathbf{y}[\delta_{u+1},...,\delta_{u+m}]^T$. Since $R/\alpha^c$ is a loopless Laplacian matrix, $^{\alpha}R_{11} \mathbb{1}+^{\alpha}R_{12} \mathbb{1}+...+^{\alpha}R_{1(u+2)}\mathbb{1}=\mathbb{0}$. Hence, it is simple to see that $^{\alpha}R_{11}\mathbb{1}=-\mathbf{y}\zeta$, where $\zeta \in \mathbb{R}$ and, $\zeta(^\alpha R_{11})^{-1}\mathbf{y}=\mathbb{1}$. Substituiting $^{\alpha}R_{1j}$ in eqn. \eqref{eqn:impact_on_final_op}, we get:\begingroup
\setlength{\abovedisplayskip}{2pt}
\setlength{\belowdisplayskip}{2pt}
\begin{align*}
    \mathbf{\hat{x}}_1^*&=-(^{\alpha}R_{11})^{-1}\mathbf{y}\big(\delta_1b_3+...+\delta_ub_{u+2}+[\delta_{u+1},...,\delta_{u+m}]^T\mathbf{x}_s(0)\big)\\
    &=-\mathbb{1}\zeta_2, \qquad \textit{ where } \zeta_2\in\mathbb{R}.
\end{align*}
\endgroup
Hence, eqn. \eqref{eqn:rank_condition} is sufficient to ensure that $\mathcal{C}$ form an opinion cluster.

\textbf{Proof of C2:} \textbf{(Sufficiency):}
By Lemma \ref{lemma:ACC}, if  $\mathcal{C}$ is composed of a stubborn LTP agent $p$ and those in $\mathcal{N}_p$, it forms an opinion cluster. 
%The sufficiency condition follows directly . Hence, we prove the necessary condition.

\textbf{(Neccesity):}
Let  $\mathcal{C}$ contain atleast one stubborn agent and form an opinion cluster. %As discussed, the agents in $\mathcal{C}$ can form an opinion cluster only if each influential agent equally impacts the agents in $\mathcal{C}$. 
To determine the conditions under which this holds, we again consider $\alpha$ to be the union of all nodes in $\mathcal{C}$, the agents in the oblivious iSCCs and $\{n+1,...,n+m\}$. Further, we evaluate $R/\alpha^c$ using eqn. \eqref{eqn:Schur_complement} and obtain the reduced linear eqns. \eqref{eqn:reduced_LEs}.
We first establish that $\mathcal{C}$ can contain atmost one stubborn agent. On the contrary, let $\mathcal{C}$ contain two stubborn agents such that they are indexed $1$ and $2$. The nodes in $\mathcal{G}(R)$ corresponding to their initial opinions are indexed $n+1$ and $n+2$, respectively. By Statement 3 in Lemma \ref{lm:Basic_properties}, for $i,j\in \alpha$, an edge $(i,j)$ exists in $\mathcal{G}_{\alpha}$ if and only if there is a path in $\mathcal{G}(R)$ from a node $i$ to $j$ that is composed only of nodes in $\alpha^c,i$ and $j$. Equivalently, the corresponding entry of $R/\alpha^c$ is non-zero. By construction of $\mathcal{G}(R)$, $n+1$ only has an outgoing edge to $1$ and $n+2$ only has an outgoing edge to $2$. Hence, each directed path from $n+1$ (resp. $n+2$) to any node in $\mathcal{G}(R)$ traverses $1$ (resp. $2$). 
Since $1$ and $2$ lie in $\alpha$, hence,
%know that for any $d,e\in \alpha$, the corresponding entry $r_{F(d),F(e)}^1$ of  $R/\alpha^c$ is non-zero only when there is an edge $(e,d)$ or a path $e \to d$  in $\mathcal{G}(R)$ that traverses only the nodes in $\alpha^c$ (except $e$ and $d$). Thus,  it follows that:
\begingroup
\setlength{\abovedisplayskip}{2pt}
\setlength{\belowdisplayskip}{2pt}
\begin{align*}
^{\alpha}R_{1,(u+2)}[1:|\mathcal{C}|,1]=
%R/\alpha^c[F(\mathcal{C}),F(n+1)]=
[\beta_1,\mathbb{0}_{|\mathcal{C}|-1}]^T \\  
^{\alpha}R_{1,(u+2)}[1:|\mathcal{C}|,2]=[0,\beta_2,\mathbb{0}_{|\mathcal{C}|-2}]^T. 
\end{align*}
\endgroup

Next, let $H=[h_{ij}]=(^{\alpha}R_{11})^{-1}$. %Then, we can rewrite eqn. \eqref{eqn:eqn-1} as, %it follows that the contribution of the stubborn agents $i$ and $j$
From eqn. \eqref{eqn:impact_on_final_op}, the contribution of the initial opinion of stubborn agent $1$ in the final opinions of agents in $\mathcal{C}$ is given by $-H ^{\alpha}R_{1,(u+2)}[1:|\mathcal{C}|,1]x_1(0)=-[h_{11}\beta_1,h_{21}\beta_1,...,h_{|\mathcal{C}|1}\beta_1]^Tx_1(0)$. Equivalently, the contribution of the initial opinion of $2$ is given by $H^{\alpha}R_{1,(u+2)}[1:|\mathcal{C}|,2]x_2(0)=[h_{12}\beta_2,h_{22}\beta_2...,h_{|\mathcal{C}|2}\beta_2]^Tx_2(0)$. Since the agents in $\mathcal{C}$ form an opinion cluster, the impact of agent $1$ and $2$ on each agent in $\mathcal{C}$ must be equal. Consequently, $h_{11}=h_{21}=...=h_{|\mathcal{C}|1}$ and $h_{12}=h_{22}=...=h_{|\mathcal{C}|2}$. However, since $H$ is invertible, two of its columns cannot be linearly dependent. Therefore, a set $\mathcal{C}$ that forms an opinion cluster can have atmost one stubborn agent. 

From the above discussion, it follows that if $\mathcal{C}$ contains stubborn agent $1$, then $1$ is the unique stubborn agent in $\mathcal{C}$ and all the entries of the first column of $H$ are equal. %Let the first agent $i$ in $\mathcal{C}$ be the stubborn and the rest be non-stubborn. 
The matrix $H^{-1}=^{\alpha}R_{11}$ and it satisfies $H^{-1}H=I$. Consequently, $^{\alpha}R_{11}h_{11}\mathbb{1}=\mathbb{e}_1$. %Since the stubborn agent $1$ is the first node in $\mathcal{C}$, and all entries of the first column of $H$ are equal, 
\begin{comment}
the following eqn. holds:
\begingroup
\setlength{\abovedisplayskip}{2pt}
\setlength{\belowdisplayskip}{2pt}
\begin{align}
\label{eqn:row-sum}
%R/\alpha^c[F(\mathcal{C})]h_{11}\mathbb{1}_{|\mathcal{C}|}=e_1 

\end{align}
\endgroup    
\end{comment}
This equation implies that the row-sum of each row of $^{\alpha}R_{11}$ indexed $2,...,|\mathcal{C}|$ is zero. Since $R/\alpha^c$ is a Laplacian matrix, only its diagonal entries are non-negative, and the rest are non-positive. Consequently, only the first row of the submatrices $^{\alpha}R_{1j}$ for $j\in\{2,...,u+2\}$ can have non-zero entries and all entries in the remaining rows of these matrices are zero.
\begin{comment}
    
$R/\alpha^c[F(\mathcal{C}\setminus \{i\}),F([u])]$ and $R/\alpha^c[F(\mathcal{C} \setminus \{i\}),F(\mathcal{S})]$
\end{comment} 
This implies that each path in $\mathcal{G}$ from the oblivious agents in iSCCs and stubborn agents traverses $1$ to reach the remaining agents in $\mathcal{C}$. Since $1$ is stubborn and Condition (i) in Defn. \ref{defn:LTP} holds, thus $1$ is the LTP agent and the rest in $\mathcal{C}$ are in $\mathcal{N}_1$. \hfill$\blacksquare$

\textbf{\textcolor{cyan}{Proof of Theorem \ref{thm:CC}}:}
(\textbf{Sufficiency}) %The agents in $\mathcal{G}$ are partitioned into $m$ disjoint subgroups $\mathcal{V}_1,...,\mathcal{V}_{m}$ such that each group $\mathcal{V}_i$ is composed of a stubborn LTP agent $s_i$ and $\mathcal{N}_{s_i}$ for $i\in[m]$.
By Theorem \ref{thm:WC}, the following holds: (a) each subgroup $\mathcal{V}_i$ forms an opinion cluster (b) since each subgroup in $\mathcal{G}$ contains a stubborn agent, two distinct subgroups $\mathcal{V}_i$ and $\mathcal{V}_j$ cannot form one opinion cluster (\textit{i.e} they cannot converge to the same final opinion for all initial opinions), where $i,j\in\{1,...,m\}$).  %To prove that cluster consensus is achieved for almost all initial opinions, we examine if eqn. \eqref{eqn:distinct_op} holds for almost all initial opinions. Each . Thus, by Theorem \ref{thm:WC}, 
%each of these subgroups forms an opinion cluster if and only if the stubborn agent is an LTP agent and the remaining non-stubborn agents are persuaded by it. Hence, 
 However, it may occur that for some special initial opinions, the agents in distinct subgroups converge to the same final opinion value. Next, we show that such initial opinions form a measure-zero set in $\mathbb{R}^{n}$ and hence cluster consensus is achieved for almost all initial states.

%Since the oblivious agents are governed by the DeGroot's model, the agents in $\mathcal{V}_1$ form an opinion cluster (reach consensus for any initial opinions) as the oblivious agents contain a unique aperiodic iSCC \cite{degroot1974consensus}. Further,  A non-oblivious agent and an oblivious agent cannot have consensus (Lemma 4 in \cite{yao2022cluster}). Hence, the final opinions of agents in $\mathcal{V}_1$ and $\mathcal{V}_j$ are distinct. Further,

%we determine whether the final opinions of agents in distinct subgroup can converge for some special initial states. % can converge to the same varemain distinct Consequently, under the given partition of the agents, none of the two subgroups merge to form an opinion cluster Next, we show that the final opinions of any two opinion clusters formed by the stubborn agents, always converge to distinct values.
%Consider a network $\mathcal{G}$ with $m$ stubborn agents such that each stubborn agent is an LTP agent. Further consider that the agents in $\mathcal{G}$ can be partitioned into $m$ groups ($\mathcal{V}_1,...,\mathcal{V}_m$) with each group containing an LTP agent and the nodes persuaded by it. 
Let the agents be re-numbered such that the agents in subgroup $\mathcal{V}_i$ are indexed as $\sum_{j=1}^{i-1}|\mathcal{V}_j|+1,...,\sum_{j=1}^{i}|\mathcal{V}_j|$ for $i\in[m]$. %are agents , $|\mathcal{V}_1|+1,...,|\mathcal{V}_1|+|\mathcal{V}_2|$ are agents in subgroup $\mathcal{V}_2$ and so on. 
Additionally, let the stubborn agent in the subgroup $\mathcal{V}_i$ be indexed as $\sum_{j=1}^{i-1}|\mathcal{V}_j|+1$. After a permutation of the rows and columns of $W$, it can be written as: $\scriptsize W=\begin{bmatrix}
    W_{11} & ... & W_{1m} \\
    \vdots & \ddots & \vdots \\
     W_{m1} & ... & W_{mm}   
     \end{bmatrix}$.

Similarly, $\mathbf{x}^*$ is partitioned as $\mathbf{x}^*=[\mathbf{x}_1^*,\mathbf{x}_2^*,..., \mathbf{x}_m^*]$ with $\mathbf{x}_i^*$ denoting the final opinions of subgroup $\mathcal{V}_i$. Since each subgroup forms an opinion cluster, $\mathbf{x}_i^*=y_i\mathbb{1}_{|\mathcal{V}_i|}$ where $y_i\in \mathbb{R}$. For the given case, the final opinions of agents in $\mathcal{G}$ satisfy:
\begingroup
\setlength{\abovedisplayskip}{2pt}
\setlength{\belowdisplayskip}{2pt}
\begin{align}
\label{eqn:steady_state_OC}
    {y}_1 \mathbb{1}_{|\mathcal{V}_1|}&=(I-\tilde{\beta}_1)(W_{11}y_1\mathbb{1}_{|\mathcal{V}_1|}+...+W_{1m}{y}_m \mathbb{1}_{|\mathcal{V}_m|})+\tilde{\beta}_1\mathbf{x}_1(0) \nonumber\\
      {y}_2 \mathbb{1}_{|\mathcal{V}_2|}&=(I-\tilde{\beta}_2)(W_{21}y_1\mathbb{1}_{|\mathcal{V}_1|}+...+W_{2m}{y}_m \mathbb{1}_{|\mathcal{V}_m|})+\tilde{\beta}_2\mathbf{x}_2(0) \nonumber\\
        &\vdots \nonumber \\
         {y}_m \mathbb{1}_{|\mathcal{V}_m|}&=(I-\tilde{\beta}_m)(W_{m1}y_1\mathbb{1}_{|\mathcal{V}_1|}+...+W_{mm}{y}_m \mathbb{1}_{|\mathcal{V}_m|})+ \nonumber\\
         &\tilde{\beta}_m\mathbf{x}_m(0)
\end{align}
\endgroup

where $\tilde{\beta_i}=\operatorname{diag}([a_{i},\mathbb{0}_{|\mathcal{V}_i-1|}])$, $a_i=\beta_{\sum_{j=1}^{i-1}|\mathcal{V}_j|+1}$ and $\mathbf{x}_i(0)$ denotes initial opinions of agents in subgroup $\mathcal{V}_i$ for $i\in[m]$. 
Under the given indexing, the first row of $W_{ij}$ (where $i\neq j$) corresponds to stubborn LTP agent $s_i$, and the remaining rows correspond to the agents in $\mathcal{N}_{s_i}$. Since the agents in $\mathcal{N}_{s_i}$ cannot have an incoming edge from an agent in any other subgroup $\mathcal{V}_j$, $W_{ij}\mathbb{1}_{|\mathcal{V}_j|}=\begin{bmatrix}
        \omega_{ij} & \mathbb{0}_{|\mathcal{V}_1|-1}
    \end{bmatrix}^T$. %Thus, eqn. \eqref{eqn:steady_state_OC}, can be further simplified to,
   % we use the following property that each $W_{ij}$ for $i\neq j$ satisfies: . Based on the path-based definition of LTPs,  
   Thus, from eqn. \eqref{eqn:steady_state_OC}, we obtain :

$  \scriptsize  \begin{bmatrix}
        y_1 \\ y_2 \\ \vdots \\ y_m
    \end{bmatrix}=(I-A) \begin{bmatrix}
        \omega_{11} & \omega_{12} & ... & \omega_{1m} \\
        \omega_{21} & \omega_{22} & ... & \omega_{2m} \\
        \vdots & \cdots & \ddots & \vdots \\
        \omega_{m1} & \omega_{m2} & ... & \omega_{mm} \\ 
    \end{bmatrix}\begin{bmatrix}
        y_1 \\ y_2 \\ \vdots \\ y_m
    \end{bmatrix}+A\mathbf{x}_s(0)
$ 

where $\omega_{ii}$ is the row sum of the first row of $W_{ii}$ for $i\in[m]$ and $A=\operatorname{diag}(a_1,...,a_m)$. Let $\Omega=[\omega_{ij}]$. Clearly, $\Omega \mathbb{1}_m=\mathbb{1}_m$. Moreover, $A$ satisfies $ \mathbb{0}_m <A\mathbb{1} \leq \mathbb{1}_m$. Thus, $\rho((I-A)\Omega)<1$. We can express $\mathbf{x}_s(0)$ as $\mathbf{x}_s(0)=A^{-1}(I-(I-A)\Omega)A\mathbf{y}$, where $\mathbf{y}=[y_1,...,y_m]$. Next, we prove the following claim:

\textbf{Claim:} $y_i\neq y_j$ for almost all $\mathbf{x}_s(0)$, where $i,j\in[m]$ and $i\neq j$.

WLOG, let $i=1$ and $j=2$. Consider a set of points in $\mathbb{R}^m$ where $y_{1}=y_{2}$. This set forms an $m-1$ dimensional subspace of $\mathbb{R}^m$ with basis $[1,1,0,...,0],\mathbb{e}_3,...,\mathbb{e}_m$. %Thus, the subspace has a dimension $m-1$. 
Consequently, this set is a measure zero set. 
We can form $\binom{m}{2}$ such subspaces where $y_i=y_j$ for any $i,j\in[m]$. Since the union of finitely many measure-zero sets has measure zero, the set of all such points $\mathbf{y}\in \mathbb{R}^m$ for which the final opinions of agents in two or more distinct subgroups is equal forms a measure zero set. 

For any linear transformation $L:\mathbb{R}^m \to \mathbb{R}^m$ and a measurable set $E$, the following holds: $m(L(E))=\operatorname{det}(L) \cdot m(E)$, where $m(.)$ denotes the measure of a set (Ch.2 in \cite{stein2009real}). Let $E$ be the set of points such that there is a $i,j\in[m]$ such that $y_i=y_j$. Then, $m(E)$ is zero. Since
$A^{-1}(I-(I-A)\Omega)$ is an invertible matrix with a non-zero determinant, $m(A^{-1}(I-(I-A)\Omega)E)=0$.
Therefore, the initial states for which any two opinion clusters converge to the same opinion form a measure-zero set. Consequently, cluster consensus is achieved for almost all initial opinions.

\textbf{(Necessity)}
Let a network $\mathcal{G}$ with $m$ stubborn agents achieve cluster consensus with $m$ opinion clusters for almost all initial opinions. By Theorem \ref{thm:WC}, it follows that the agents in $\mathcal{G}$ must form $m$ disjoint groups such that each subgroup is composed of the stubborn agent, which is the LTP, and the rest are persuaded by it. \hfill$\blacksquare$

%each of the $m$ opeach subthe $m$ disjoint subgroups, formed by the agents in $\mathcal{G}$, such that These $m$ subgroups form the $m$ opinion clusters. Further, we know that each such opinion cluster converges to a distinct opinion for almost all initial states.

%each %stubborn agent belongs to a distinct subgroup of agents that forms an opinion cluster. Hence, each opinion cluster are

%If $\mathcal{G}$ has oblivious agents and it has multiple iSCCs, then in most scenarios, the agents in each iSCC converge have consensus and converge to distinct values. However, if two such iSCCs exist that have identical number of nodes and equal edge weights, then these agents can form in both the iSCCs converge to the same opinion value. Thus, only when the set of oblivious agents contain a unique aperiodic iSCC it converges to a consensus value that always differs from the rest of the opinion clusters.

%and $x_p^*=-(\sum_{i=1}^{m}r_{p,n+i}^1x_{n+i}(0))/(r_{p,p}^1+r_{p,q}^1)$.
% \begin{proof}
\vspace{-10pt}
\bibliographystyle{IEEEtran}
\bibliography{IEEEabrv,references-2}

% Generated by IEEEtran.bst, version: 1.14 (2015/08/26)
\begin{thebibliography}{10}
\providecommand{\url}[1]{#1}
\csname url@samestyle\endcsname
\providecommand{\newblock}{\relax}
\providecommand{\bibinfo}[2]{#2}
\providecommand{\BIBentrySTDinterwordspacing}{\spaceskip=0pt\relax}
\providecommand{\BIBentryALTinterwordstretchfactor}{4}
\providecommand{\BIBentryALTinterwordspacing}{\spaceskip=\fontdimen2\font plus
\BIBentryALTinterwordstretchfactor\fontdimen3\font minus \fontdimen4\font\relax}
\providecommand{\BIBforeignlanguage}[2]{{%
\expandafter\ifx\csname l@#1\endcsname\relax
\typeout{** WARNING: IEEEtran.bst: No hyphenation pattern has been}%
\typeout{** loaded for the language `#1'. Using the pattern for}%
\typeout{** the default language instead.}%
\else
\language=\csname l@#1\endcsname
\fi
#2}}
\providecommand{\BIBdecl}{\relax}
\BIBdecl

\bibitem{pratelli2024entropy}
M.~Pratelli, F.~Saracco, and M.~Petrocchi, ``Entropy-based detection of twitter echo chambers,'' \emph{PNAS nexus}, vol.~3, no.~5, p. pgae177, 2024.

\bibitem{echo_chamber}
Y.~Ge, S.~Zhao, H.~Zhou, C.~Pei, F.~Sun, W.~Ou, and Y.~Zhang, ``Understanding echo chambers in e-commerce recommender systems,'' in \emph{Proceedings of the 43rd International ACM SIGIR Conference on Research and Development in Information Retrieval}, ser. SIGIR '20, 2020, p. 2261–2270.

\bibitem{xia2011clustering}
W.~Xia and M.~Cao, ``Clustering in diffusively coupled networks,'' \emph{Automatica}, vol.~47, no.~11, pp. 2395--2405, 2011.

\bibitem{gambuzza2020distributed}
L.~V. Gambuzza and M.~Frasca, ``Distributed control of multiconsensus,'' \emph{IEEE Transactions on Automatic Control}, vol.~66, no.~5, pp. 2032--2044, 2020.

\bibitem{Switching_T}
A.~Bizyaeva, G.~Amorim, M.~Santos, A.~Franci, and N.~E. Leonard, ``Switching transformations for decentralized control of opinion patterns in signed networks: Application to dynamic task allocation,'' \emph{IEEE Control Systems Letters}, vol.~6, pp. 3463--3468, 2022.

\bibitem{anderson2008rigid}
B.~D. Anderson, C.~Yu, B.~Fidan, and J.~M. Hendrickx, ``Rigid graph control architectures for autonomous formations,'' \emph{IEEE Control systems magazine}, vol.~28, no.~6, pp. 48--63, 2008.

\bibitem{tomaselli2023multiconsensus}
C.~Tomaselli, L.~V. Gambuzza, F.~Sorrentino, and M.~Frasca, ``Multiconsensus induced by network symmetries,'' \emph{Systems \& Control Letters}, vol. 181, p. 105629, 2023.

\bibitem{degroot1974consensus}
M.~H. DeGroot, ``Reaching a consensus,'' \emph{Journal of the American Statistical Association}, vol.~69, no. 345, pp. 118--121, 1974.

\bibitem{friedkin1990opinions}
N.~E. Friedkin and E.~C. Johnsen, ``Social influence and opinions,'' \emph{The Journal of Mathematical Sociology}, vol.~15, pp. 193--206, 1990.

\bibitem{rainer2002opinion}
H.~Rainer and U.~Krause, ``Opinion dynamics and bounded confidence: Models, analysis and simulation,'' \emph{Journal of Artificial Societies and Social Simulation}, vol.~5, no.~3, 2002.

\bibitem{dandekar2013biased}
P.~Dandekar, A.~Goel, and D.~T. Lee, ``Biased assimilation, homophily, and the dynamics of polarization,'' \emph{Proceedings of the National Academy of Sciences}, vol. 110, no.~15, pp. 5791--5796, 2013.

\bibitem{friedkin2015control}
N.~E. Friedkin, ``The problem of social control and coordination of complex systems in sociology: A look at the community cleavage problem,'' \emph{IEEE Control Systems Magazine}, vol.~35, pp. 40--51, 2015.

\bibitem{parsegov2016novel}
S.~E. Parsegov, A.~V. Proskurnikov, R.~Tempo, and N.~E. Friedkin, ``Novel multidimensional models of opinion dynamics in social networks,'' \emph{IEEE Transactions on Automatic Control}, vol.~62, no.~5, pp. 2270--2285, 2016.

\bibitem{TIAN2018213}
Y.~Tian and L.~Wang, ``Opinion dynamics in social networks with stubborn agents: An issue-based perspective,'' \emph{Automatica}, vol.~96, pp. 213--223, 2018.

\bibitem{como2016local}
G.~Como and F.~Fagnani, ``From local averaging to emergent global behaviors: The fundamental role of network interconnections,'' \emph{Systems \& Control Letters}, vol.~95, pp. 70--76, 2016.

\bibitem{Opinion_Fluctuations}
D.~Acemo\u{g}lu, G.~Como, F.~Fagnani, and A.~Ozdaglar, ``Opinion fluctuations and disagreement in social networks,'' \emph{Mathematics of Operations Research}, vol.~38, no.~1, pp. 1--27, 2013.

\bibitem{yao2022cluster}
L.~Yao, D.~Xie, and J.~Zhang, ``Cluster consensus of opinion dynamics with stubborn individuals,'' \emph{Systems \& Control Letters}, vol. 165, p. 105267, 2022.

\bibitem{GHADERI20143209}
J.~Ghaderi and R.~Srikant, ``Opinion dynamics in social networks with stubborn agents: Equilibrium and convergence rate,'' \emph{Automatica}, vol.~50, no.~12, pp. 3209--3215, 2014.

\bibitem{gionis2013opinion}
A.~Gionis, E.~Terzi, and P.~Tsaparas, ``Opinion maximization in social networks,'' in \emph{Proceedings of the 2013 SIAM international conference on data mining}.\hskip 1em plus 0.5em minus 0.4em\relax SIAM, 2013, pp. 387--395.

\bibitem{ECC_aashi}
A.~Shrinate and T.~Tripathy, ``Towards influence centrality: where to not add an edge in the network?'' in \emph{2025 European Control Conference (ECC)}, 2025, pp. 665--672.

\bibitem{shrinate2025opinionclusteringfriedkinjohnsenmodel}
------, ``Opinion clustering under the {F}riedkin-{J}ohnsen model: Agreement in disagreement,'' {S}eptember 2025, arXiv:2509.11045.

\bibitem{mattei2022bow}
M.~Mattei, M.~Pratelli, G.~Caldarelli, M.~Petrocchi, and F.~Saracco, ``Bow-tie structures of twitter discursive communities,'' \emph{Scientific Reports}, vol.~12, no.~1, p. 12944, 2022.

\bibitem{Kron_red_digraphs}
T.~Sugiyama and K.~Sato, ``Kron reduction and effective resistance of directed graphs,'' \emph{SIAM Journal on Matrix Analysis and Applications}, vol.~44, no.~1, pp. 270--292, 2023.

\bibitem{zhang2006schur}
F.~Zhang, \emph{The Schur complement and its applications}.\hskip 1em plus 0.5em minus 0.4em\relax Springer Science \& Business Media, 2006, vol.~4.

\bibitem{ITALIANO201274}
G.~F. Italiano, L.~Laura, and F.~Santaroni, ``Finding strong bridges and strong articulation points in linear time,'' \emph{Theoretical Computer Science}, vol. 447, pp. 74--84, 2012.

\bibitem{bullo}
F.~Bullo, \emph{Lectures on Network Systems}, {1.6}~ed.\hskip 1em plus 0.5em minus 0.4em\relax Kindle Direct Publishing, 2022.

\bibitem{luo2021cluster}
S.~Luo and D.~Ye, ``Cluster consensus control of linear multiagent systems under directed topology with general partition,'' \emph{IEEE Transactions on Automatic Control}, vol.~67, no.~4, pp. 1929--1936, 2021.

\bibitem{QIN20132898}
J.~Qin and C.~Yu, ``Cluster consensus control of generic linear multi-agent systems under directed topology with acyclic partition,'' \emph{Automatica}, vol.~49, no.~9, pp. 2898--2905, 2013.

\bibitem{maria}
G.~De~Pasquale and M.~E. Valcher, ``Consensus for clusters of agents with cooperative and antagonistic relationships,'' \emph{Automatica}, vol. 135, p. 110002, 2022.

\bibitem{horn2012matrix}
R.~A. Horn and C.~R. Johnson, \emph{Matrix analysis}.\hskip 1em plus 0.5em minus 0.4em\relax Cambridge university press, 2012.

\bibitem{stein2009real}
E.~M. Stein and R.~Shakarchi, \emph{Real analysis: measure theory, integration, and Hilbert spaces}.\hskip 1em plus 0.5em minus 0.4em\relax Princeton University Press, 2009.

\end{thebibliography}
 
\end{document}